\documentclass[journal]{IEEEtran}
\usepackage{amsmath,amsfonts}
\usepackage{algorithmic}
\usepackage{algorithm}
\usepackage{array}
\usepackage[caption=false,font=normalsize,labelfont=sf,textfont=sf]{subfig}
\usepackage{textcomp}
\usepackage{stfloats}
\usepackage{url}
\usepackage{verbatim}
\usepackage{graphicx}
\usepackage{cite}

\usepackage{amsthm}
\newtheorem{theorem}{Theorem}
\newtheorem{assumption}{Assumption}
\newtheorem{lemma}{Lemma}
\usepackage{tabularx}
\usepackage{booktabs}
\usepackage{adjustbox}
\newcolumntype{Y}{>{\centering\arraybackslash}X}
\usepackage{colortbl}
\usepackage{xcolor}
\usepackage{amssymb}

\hyphenation{op-tical net-works semi-conduc-tor IEEE-Xplore}
% updated with editorial comments 8/9/2021

\begin{document}

% \title{CacheFL: Efficient Federated Cache Model Fine-Tuning for Vision-Language Models}
% \title{Privacy-Preserving Yet Efficient: Federated Cache Model Fine-Tuning for Vision-Language Models}
\title{CacheFL: Privacy-Preserving and Efficient Federated Cache Model Fine-Tuning for Vision-Language Models}

\author{Mengjun Yi,
        Hanwen Zhang,
        Hui Dou,
        Furao Shen,~\IEEEmembership{Member,~IEEE} and Jian Zhao,~\IEEEmembership{Senior Member,~IEEE}
        % <-this % stops a space
\thanks{Mengjun Yi, Hanwen Zhang, and Furao Shen are with National Key Laboratory for Novel Software Technology and School of Artificial Intelligence, Nanjing University, Nanjing 210023, China (e-mail: mengjunyi@smail.nju.edu.cn; hwzhang@smail.nju.edu.cn; frshen@nju.edu.cn). \textit{(Corresponding author: Furao Shen.)}}
\thanks{Hui Dou is with School of Computer Science, Nanjing University, Nanjing 210023, China (e-mail: huidou@smail.nju.edu.cn). }
\thanks{Jian Zhao is with School of Electronic Science and Engineering, Nanjing University, Nanjing 210023, China (e-mail: jianzhao@nju.edu.cn). }}% <-this % stops a space
% \thanks{Manuscript received April 19, 2021; revised August 16, 2021.}}

% The paper headers
\markboth{Journal of \LaTeX\ Class Files,~Vol.~14, No.~8, August~2021}%
{Shell \MakeLowercase{\textit{et al.}}: A Sample Article Using IEEEtran.cls for IEEE Journals}

% \IEEEpubid{0000--0000/00\$00.00~\copyright~2021 IEEE}
% Remember, if you use this you must call \IEEEpubidadjcol in the second
% column for its text to clear the IEEEpubid mark.

\maketitle

\begin{abstract}
Large pre-trained Vision-Language Models (VLMs), such as Contrastive Language-Image Pre-training (CLIP), have exhibited remarkable zero-shot performance across various image classification tasks. Fine-tuning these models on domain-specific datasets further enhances their effectiveness for downstream applications. However, fine-tuning in cloud environments raises significant concerns regarding data security and privacy. Federated Learning (FL) offers a decentralized solution by enabling model training across local clients without centralizing sensitive data, but the high communication and computation costs of transmitting full pre-trained models during training limit its scalability. Additionally, non-Independent and Identically Distributed (non-IID) data across local clients can negatively impact model convergence and performance. To address these challenges, we propose CacheFL, a novel federated learning method that replaces traditional full model fine-tuning with lightweight cache model fine-tuning. The cache model is initialized using a class-balanced dataset generated by a generative pre-trained model, effectively mitigating the impact of non-IID data. This cache model is then distributed to local clients for fine-tuning, and the updated parameters from each client are aggregated on the server and redistributed. With the updated cache model, the classification performance of CLIP is improved after just a few epochs. By limiting the training and communication to the cache model, CacheFL significantly reduces resource demands while ensuring data privacy and security. Extensive experiments conducted on ImageNet and 10 additional datasets demonstrate that CacheFL outperforms traditional approaches in terms of classification accuracy, resource efficiency, and privacy preservation.
\end{abstract}

\begin{IEEEkeywords}
Edge computing, federated learning, vision-language model.
\end{IEEEkeywords}

\section{Introduction}

\begin{figure}[t]
  \centering
  \includegraphics[width=0.85\linewidth]{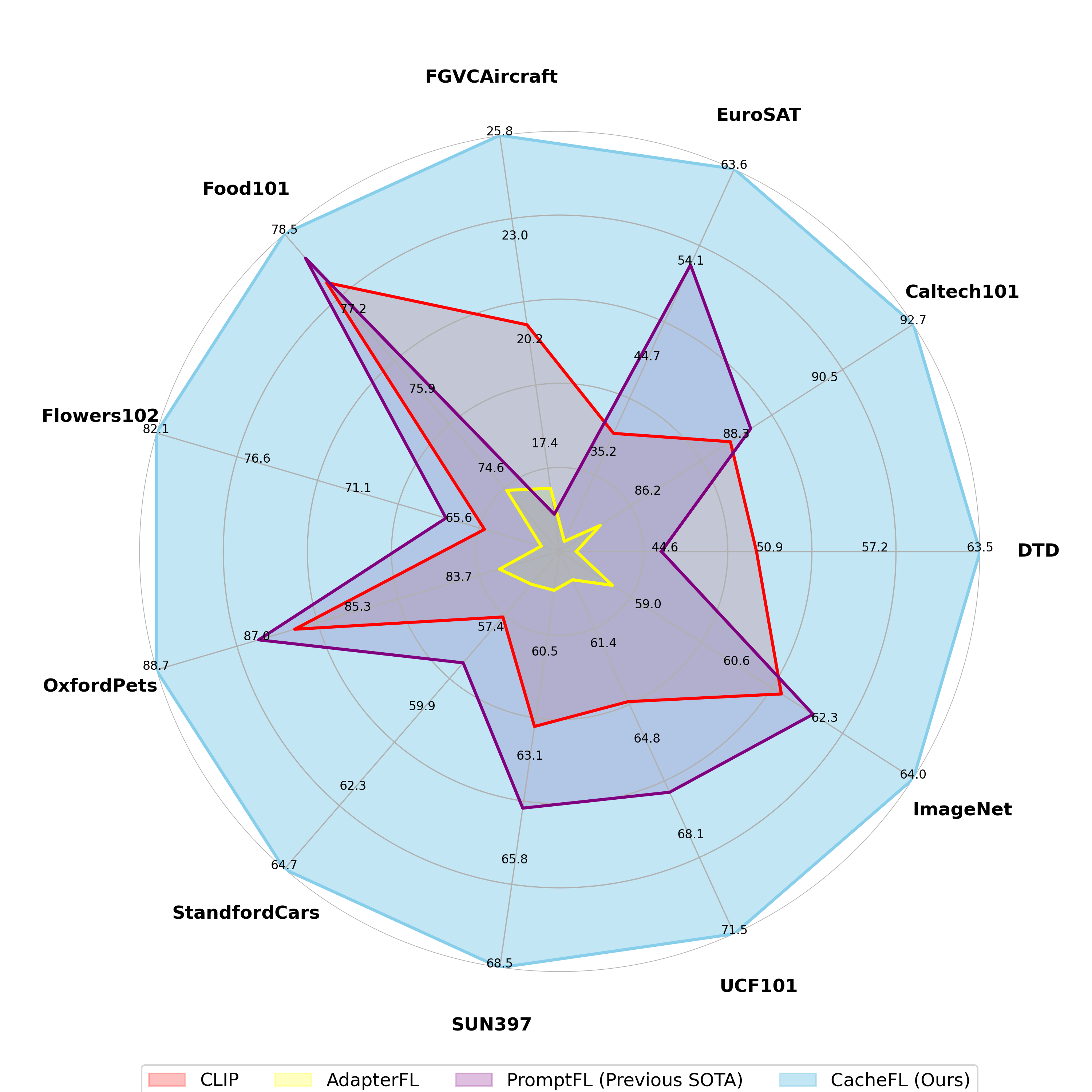}
  \caption{Performance comparison on the Extreme non-IID setting. Each axis denotes an accuracy value for the corresponding dataset (refer to Table \ref{tab:results}). CacheFL surpasses state-of-the-art methods on 11 diverse image classification datasets.}
  \label{fig:pc}
\end{figure}

\IEEEPARstart{W}{ith} the rapid advancement of artificial intelligence, Vision-Language Models (VLMs) \cite{zhang2024vision} have emerged as a key technology in multimodal learning, garnering significant attention in recent years. VLMs simultaneously model visual and linguistic signals, enabling cross-modal understanding between images and text. Among these, Contrastive Language–Image Pretraining (CLIP) \cite{radford2021learning} has emerged as a representative and influential approach. By leveraging large-scale image-text pair datasets and employing contrastive learning techniques, CLIP demonstrates remarkable performance in tasks such as image classification. While CLIP exhibits strong generalization capabilities, its performance can often be further enhanced through fine-tuning on domain-specific datasets \cite{khattak2023maple}.

However, effective fine-tuning often requires access to diverse datasets distributed across various organizations or clients. Due to privacy, security, or storage constraints, these datasets are typically challenging to centralize in the cloud for training \cite{zhou2019edge}. To overcome this limitation, Federated Learning (FL) \cite{kairouz2021advances, liu2021adaptive, wu2020fedhome} has emerged as a promising paradigm for privacy-preserving and collaborative model optimization. FL enables multiple clients to train models locally on their private data and share updates to build a global model, ensuring that data remains ``usable but invisible" during the distributed training process.

Despite its potential, directly applying FL to fine-tune CLIP and similar vision-language models presents two significant challenges. First, the large parameter sizes of these models create substantial communication and computation overheads in the federated learning framework \cite{chen2021communication, zhang2024heterogeneity, jiang2023computation}. Second, the presence of non-Independent and Identically Distributed (non-IID) client data often results in slow convergence and limited performance gains in the global model \cite{li2022federated, wang2024fedsiam, xu2024overcoming}. 

To address these challenges, we propose \textbf{CacheFL}, a cache model-based federated learning fine-tuning method designed to optimize the fine-tuning process of CLIP and similar vision-language models in federated settings. CacheFL begins by generating a class-balanced synthetic dataset on the server using a generative pre-trained model \cite{ramesh2021zero}. Inspired by the TIP-Adapter method \cite{zhang2021tip}, we construct a lightweight cache model by storing visual features and
one-hot encoded labels derived from the synthetic dataset. This cache model is then distributed to the clients. In contrast to fine-tuning the entire model, transmitting and training only the cache model significantly reduces both communication and computation overheads. Each client updates the cache model using its local data, and the parameter updates are subsequently aggregated on the server and redistributed to the clients. Through this iterative process, CacheFL enhances CLIP's performance (see Figure \ref{fig:pc}) while preserving training efficiency. The main works of this paper include the following:
\begin{itemize}
\item \textbf{Efficient.} We propose CacheFL, a novel federated cache model fine-tuning framework for vision-language models. The cache model comprises only visual features and one-hot encoded labels, forming a lightweight structure with significantly fewer parameters. By replacing conventional full model fine-tuning with cache model fine-tuning, CacheFL greatly reduces both communication and computation overhead.

\item \textbf{Privacy-Preserving.} We adopt federated learning in place of centralized fine-tuning and introduce a new cache model initialization strategy. 
This design allows vision-language models to be fine-tuned without exposing any local data. Furthermore, we utilize a generative pre-trained model to synthesize a class-balanced dataset, enabling cache model initialization without relying on additional public or private datasets. This approach not only reduces data collection costs but also safeguards data privacy.

\item \textbf{Effective.} We conduct experiments on various image classification datasets under both IID and non-IID settings. Experimental results show that our method excels in model accuracy, resource efficiency, and privacy preservation. Moreover, CacheFL effectively mitigates the challenges posed by non-IID data in federated learning scenarios.
\end{itemize}

The remainder of this paper is organized as follows. Section~\ref{2} reviews related work on vision-language models and federated learning. Section~\ref{3} describes the proposed method in detail. Section~\ref{6} provides the convergence analysis, and Section~\ref{4} presents experimental results that demonstrate the effectiveness of the approach.  Finally, Section~\ref{5} concludes the paper and discusses potential future research directions.

\section{Related Work}\label{2}
In this section, we review the foundational concepts, key methodologies, and relevant literature on vision-language models and federated learning. 

\subsection{Vision-Language Models}
Vision-language models have demonstrated remarkable potential in the field of multimodal learning in recent years \cite{zhou2022learning}. These models leverage joint learning of visual and linguistic features to tackle tasks such as image classification, image-text matching, and text generation. 

\begin{figure}[t]
  \centering
  \includegraphics[width=\linewidth]{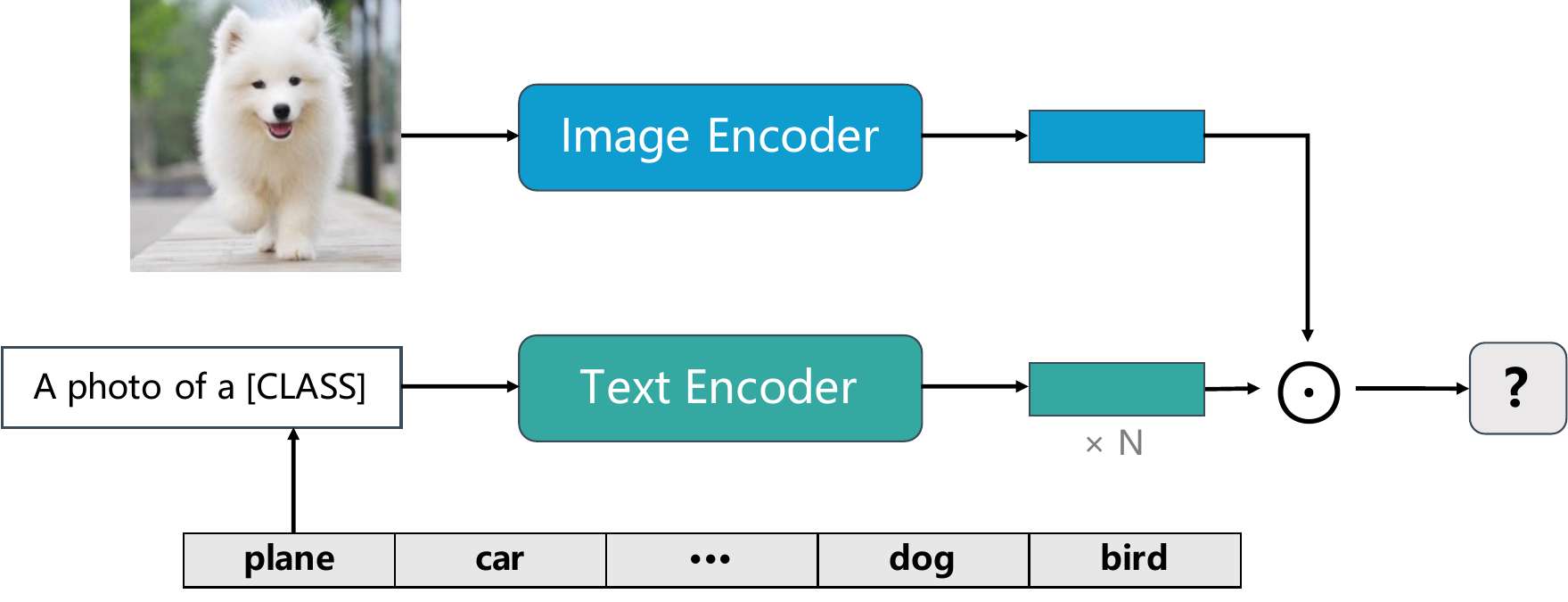}
  \caption{The Structure of CLIP. Each backbone contains an image encoder and a text encoder. Each encoder extracts feature representations from its respective modality. The model is trained to maximize the cosine similarity of the correct image-text pairs while minimizing the incorrect pairs.}
  \label{fig:clip}
\end{figure}

A representative work in this domain is CLIP proposed by OpenAI \cite{radford2021learning}. CLIP is a significant advance in multimodal machine learning that connects natural language and images. 
As shown in Figure \ref{fig:clip}, CLIP comprises two main components: Image Encoder and Text Encoder. The image encoder is often a convolutional neural network (CNN) or a vision transformer (ViT) \cite{dosovitskiy2020image}. It takes an image input and maps it to a fixed-dimensional feature vector, or an embedding, in a latent space. The text encoder is typically a transformer-based architecture \cite{vaswani2017attention}. It takes a natural language input (a sentence or phrase) and similarly maps it to an embedding in the same latent space. 
The goal is for the corresponding image and text (e.g., a photo of a dog and the caption “a dog”) to have similar embeddings, while unrelated images and texts should be far apart in this space. CLIP is notable for its zero-shot prediction capabilities, meaning it can generalize to new tasks without explicit task-specific training. For classification, CLIP encodes an image into an embedding. Simultaneously, a set of possible class labels (e.g., ``dog", ``cat", ``car") is encoded by the text encoder into corresponding embeddings. The model then compares the similarity between the image embedding and each class embedding. The class with the highest similarity score is selected as the prediction. 

While VLMs like CLIP excel in zero-shot learning scenarios, their performance can often be further enhanced through fine-tuning with task-specific data. Common fine-tuning methods can broadly be categorized into full fine-tuning and parameter-efficient fine-tuning \cite{ding2023parameter}. Full fine-tuning updates all model parameters to achieve optimal performance, but often requires substantial computational resources and may lead to overfitting, especially with limited data. In contrast, parameter-efficient fine-tuning adjusts only a subset of the model's parameters, thereby reducing resource consumption. Adapter tuning \cite{he2021effectiveness} is a parameter-efficient fine-tuning method involving inserting additional small neural network modules, known as adapters, into the structure of a pre-trained model. During fine-tuning, only the parameters of these adapters are trained, while most of the pre-trained model's parameters remain frozen. The CacheFL method proposed in this paper also belongs to an adapter tuning method. Additionally, prompt tuning \cite{lester2021power} is another method of parameter-efficient fine-tuning that guides the outputs of the pre-trained model by embedding task-specific prompt information into the input data. These prompts are either manually designed or trained using domain-specific datasets. However, regardless of the fine-tuning method employed, they often rely on downstream task data, which raises concerns about data security and privacy.

\subsection{Federated Learning}
Federated learning is a distributed machine learning framework that allows participants to collaboratively train models without sharing raw data, addressing privacy and security concerns in centralized learning \cite{li2020review}. Data from different organizations, such as hospitals, cannot be centralized due to legal and regulatory constraints. Federated learning enables these entities to jointly train models while maintaining data privacy.

In federated learning, participants can process data and train models locally, transmitting only encrypted model parameters to a server for aggregation. The specific steps are as follows: first, the server initializes a global model and distributes it to participant devices. Subsequently, each device trains the model on its local data and updates its local model weights. Each device then sends its local model updates back to the server. The server receives updates from multiple devices and employs a specific aggregation algorithm to consolidate these updates into a new global model. This process is repeated until the model converges or meets predetermined performance criteria. Typically, the server uses the Federated Averaging algorithm (FedAvg) \cite{mcmahan2017communication} to perform weighted average updates of the model parameters from each client. 
% After several local training iterations, each client uploads its model parameters to the server, which then computes a weighted average based on the data volume from each client to update the global model. 
The FedAvg formula is as follows:
\begin{equation}
W^{t+1}=\sum_{k=1}^{K} \frac{n_k}{n}  W_{k}^{t+1},
\label{eq:fedavg}
\end{equation}
where $W^{t+1}$ represents the global model parameters at iteration $t+1$, $K$ is the number of participating clients, $n_k$ represents the number of samples on the $k$-th client, $n$ represents the total number of samples on all clients and $W_{k}^{t+1}$ denotes the local model parameters of the $k$-th client at iteration $t+1$. 

However, FL faces significant challenges when applied to resource-constrained clients, especially with large-scale models such as VLMs. These models often contain millions or even billions of parameters, leading to excessive communication and computation overhead during local training and model update transmission. To address these challenges, researchers have proposed integrating parameter-efficient fine-tuning techniques with FL. For instance, PromptFL \cite{guo2023promptfl} is a method that combines prompt tuning with federated learning, allowing multiple devices to fine-tune prompts instead of the CLIP model locally. Another significant challenge in FL is the non-IID data among the clients~\cite{li2022federated}. Heterogeneous data has a significant impact on the accuracy of FedAvg. Since the distribution of each local dataset may be different from the global distribution, the local objective of each client may not be consistent with the global optimal solution. As a result, there are drifts in the local updates. When local updates are large, the average model may be far from the global optimal solution. Li \emph{et al.} \cite{li2020federated} proposed regularization techniques to mitigate data drift. However, these methods are challenging to apply directly to pre-trained models due to high resource demands. Thus, it is imperative to develop novel solutions that simultaneously address the dual challenges of resource constraints and non-IID data for the integration of FL with VLMs.

\begin{figure}[t]
  \centering
  \includegraphics[width=0.85\linewidth]{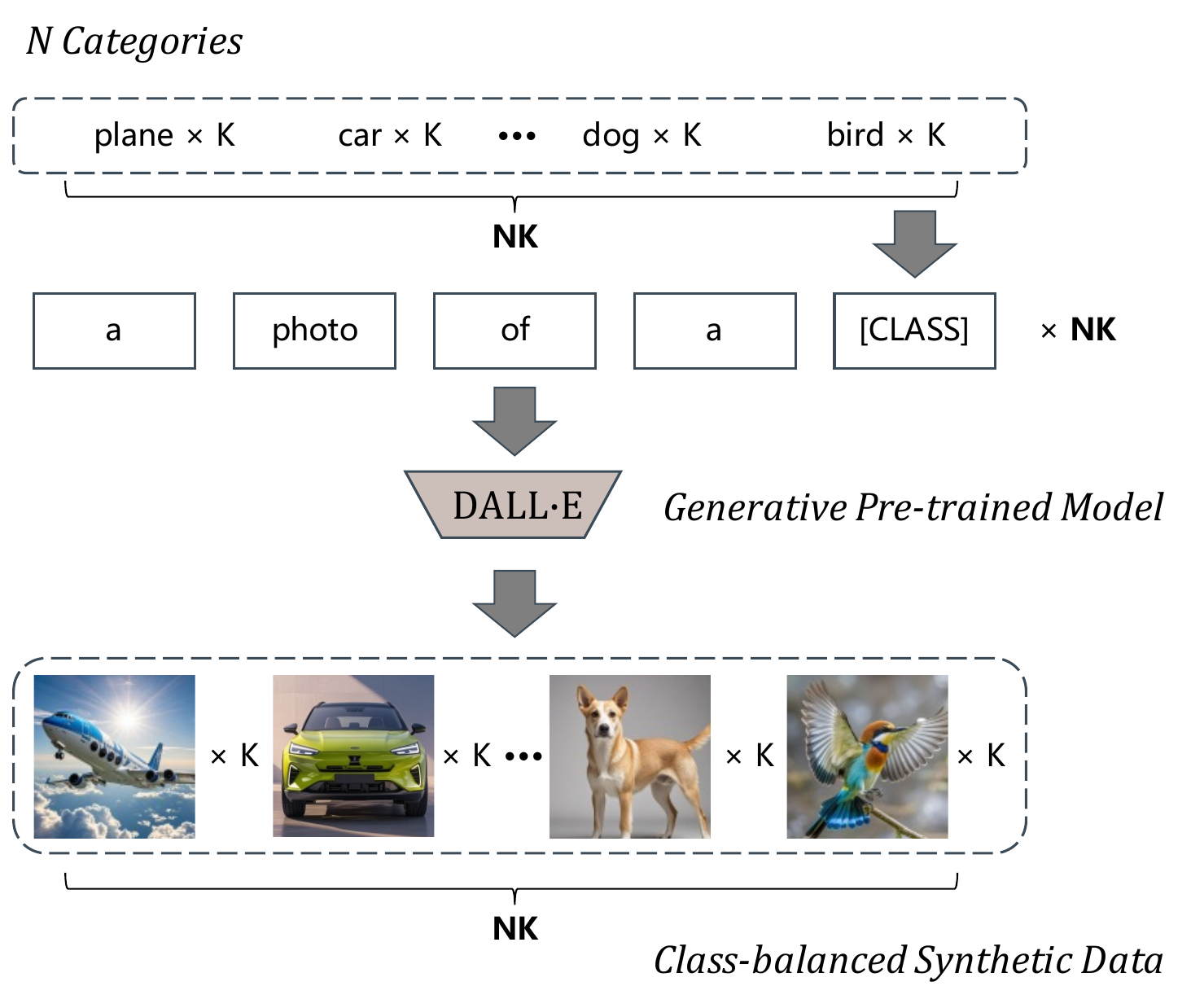}
  \caption{Class-balanced Synthetic Dataset Generation. We adopt DALL·E to generate $K$ synthetic images for $N$ categories separately.}
  \label{fig:syn}
\end{figure}

\section{Methodology}\label{3}

\begin{figure*}[t]
  \centering
  \includegraphics[width=0.9\linewidth]{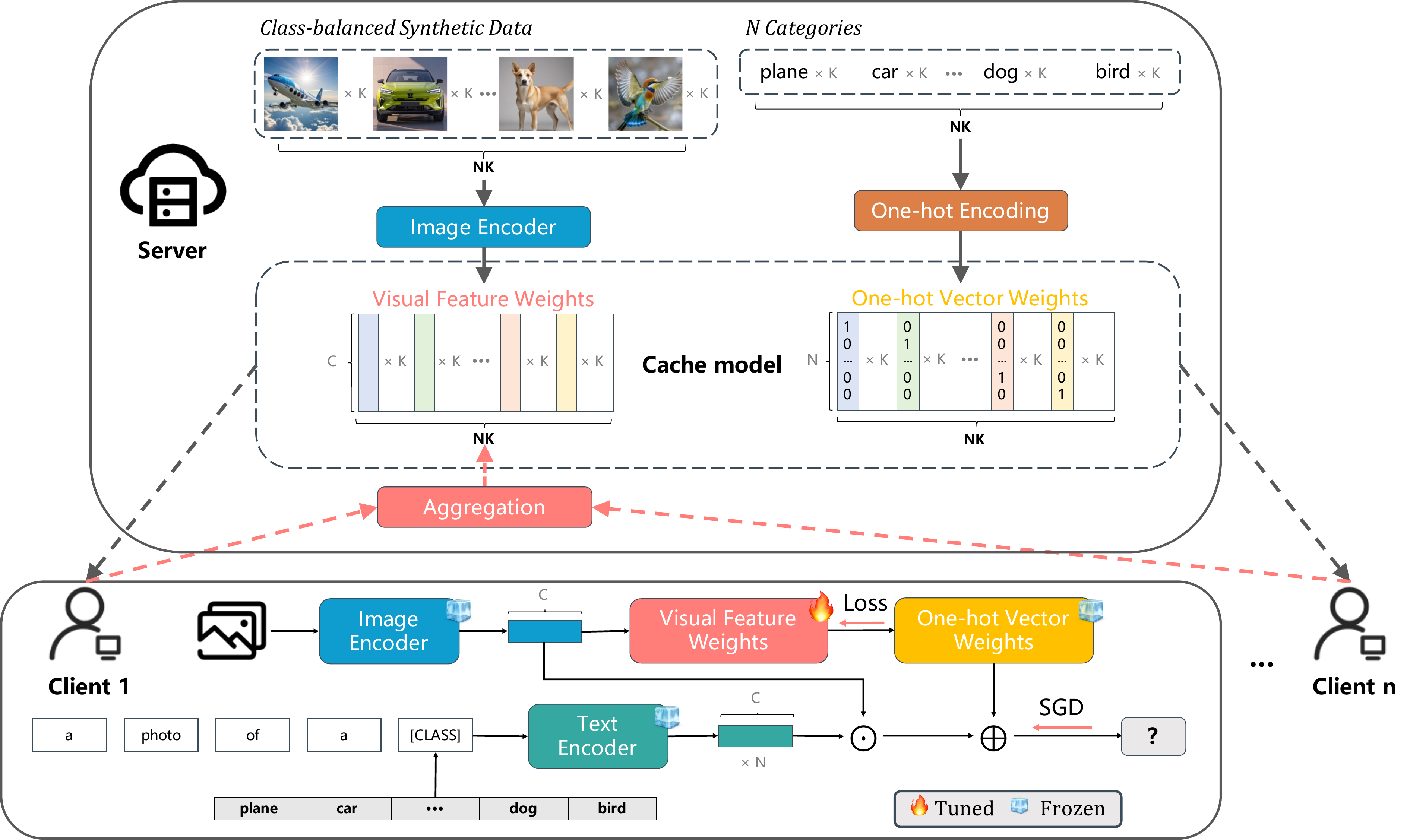}
  \caption{The workflow of CacheFL Federated Training. Each client includes a cache model (with the visual feature weights being trainable and the one-hot vector weights remaining frozen) and an out-of-the-box CLIP (with the backbone remaining frozen). The server aggregates the updates to the visual feature weights from multiple clients, and then transmits the updated parameters back to each client for further local processing.}
  \label{fig:cache}
\end{figure*}

% \subsection{Federated Cache Model Fine-Tuning}
In this section, we provide the details of our proposed method, CacheFL. To further enhance the classification performance of CLIP and its similar structures, CacheFL introduces a three-step process: (1) A class-balanced synthetic dataset is generated on the server using a generative pre-trained model. (2) A lightweight cache model is built based on this synthetic dataset. (3) The cache model is distributed to clients, where local fine-tuning is performed on their private datasets. After local training, the updated model parameters are returned to the server and aggregated to optimize the global model. Detailed descriptions of each step are provided below.

\subsection{Class-Balanced Synthetic Dataset Generation} 
Given the concerns regarding data security and privacy, the server often lacks sufficient high-quality data for model training. Moreover, collecting public datasets is costly and may not ensure balanced class distributions. While Generative Adversarial Networks (GANs) \cite{goodfellow2020generative} are widely used for data generation, their sample quality is often insufficient due to issues like mode collapse \cite{thanh2020catastrophic}, making them less practical for real-world use. In contrast, recent generative pre-trained models have shown greater potential in data generation, addressing challenges related to data scarcity and quality.

We employ the DALL·E generative pre-trained model from OpenAI for data generation. As illustrated in Figure \ref{fig:syn}, we input N class labels formatted as “a photo of a [CLASS]” into the DALL·E model, generating K images for each label, expressed as: 
\begin{equation}
    I_{N,K} = DALL \cdotp E (T_N),
\end{equation}
where $T_N$ represents the textual input of the $N$ class labels, and $I_{N,K}$ represents the $NK$ images generated for these labels. 
This method automatically generates a synthetic dataset with balanced class distributions, greatly reducing the need for manual data collection while ensuring both diversity and balance. By mitigating potential biases caused by data scarcity or imbalance, it enhances the model’s generalization capability in federated learning. 

Only a small amount of synthetic data per category needs to be generated using DALL·E before training. This process can be performed during server idle times to optimize resource usage. Alternatively, in resource-constrained environments, the synthetic dataset can be obtained via external APIs.

\subsection{Cache Model Construction} 
Directly fine-tuning all parameters of large pre-trained models like CLIP, though improving performance, demands enormous communication and computation resources. In contrast, adapter tuning provides a more lightweight fine-tuning approach. Most of the original model's parameters remain frozen, and only the adapter layers are trained, thus reducing overhead.

We use the class-balanced synthetic dataset to initialize a cache model, which serves as the adapter for CLIP. As shown in Figure \ref{fig:cache}, we extract visual features from the synthetic dataset of NK images using CLIP's image encoder, where the feature dimension is $C$, expressed as: 
\begin{equation}
    W_1 = ImageEncoder(I_{N,K}),
\end{equation}
where $W_1$ is a matrix with $C$ rows and $NK$ columns, representing cached visual feature weights. Next, we convert the labels $L_{N,K}$ of the $NK$ images into one-hot encoded vectors, expressed as: 
\begin{equation}
    W_2 = OneHot(L_{N,K}),
\end{equation}
where $W_2$ is a matrix with $N$ rows and $NK$ columns, representing cached one-hot vector weights. 
The visual feature weights $W_1$ and the one-hot vector weights $W_2$ together constitute a cache model that stores precomputed features and their associated labels, derived from synthetic images. By caching this information, the model enables efficient retrieval during both training and inference. This architecture is referred to as a ``\textbf{cache model}" because it encapsulates and reuses prior knowledge from a class-balanced dataset, serving as an effective initialization for subsequent federated fine-tuning.

\begin{algorithm}[ht]
	\caption{CacheFL}
    \label{fedavg1}
    \begin{algorithmic}[1]
        
    \STATE \textbf{SERVER:}
    \STATE Server initializes the cache model $W_1,W_2$ by the synthetic dataset
    \STATE Server broadcasts the cache model $W_1,W_2$ to all clients
    \FOR {each communication round t = 1, 2, ..., T}
        \STATE $\mathcal{S}_t \leftarrow$ Server selects a random subset of $K$ clients
        \STATE Server broadcasts $W_1^t$ to all selected clients
        \FOR {each client $k \in \mathcal{S}_t$ parallelly}
            \STATE $W_{1,k}^{t+1} \leftarrow ClientUpdate(k, W_1^t)$
        \ENDFOR
        \STATE $W_1^{t+1} \leftarrow \sum_{k \in \mathcal{S}_t} \frac{n_{k}}{n} W_{1,k}^{t+1}$
    \ENDFOR
    \STATE
    \STATE \textbf{CLIENT:} $ClientUpdate(k, W_1^t)$
    \FOR {each local epoch $e = 1, 2, ..., E$ on local data $D_k$}
        \STATE Obtain visual features: \\
        $f_{\text {vision}} = ImageEncoder(D_k)$ \\  
        \STATE Obtain textual features: \\
        $W_{\text {text}} = TextEncoder(Input_{N})$ \\ 
        \STATE Obtain the predicted logits: \\ 
        $logits = f_{\text {vision}} W_{\text {text}}^{T} + \alpha \exp \left(-\beta\left(1-f_{\text {vision}} W_1\right)\right) W_2^{T}$ \\  
        \STATE Update visual feature weights by cross-entropy loss $\mathcal{L}(\cdot)$ and SGD: \\
        $W_{1,k}^{t+1} \leftarrow W_1^t - \eta \nabla \mathcal{L}(W_1^t;(logits,y))$ \\
    \ENDFOR
    \end{algorithmic}
\end{algorithm}

\subsection{Federated Training} 
While the cache model based on the synthetic dataset can enhance downstream task performance to some degree, the improvement is often limited. Leveraging real-world data distributed across clients can significantly boost the model's performance. 

To protect data privacy while using local data, we apply federated learning to optimize the cache model's update process. As described in Algorithm \ref{fedavg1}, after initializing the cache model on the synthetic dataset, the server distributes it to each client. On the client side, images from local datasets are input into CLIP's image encoder to generate $C$-dimensional image feature vectors $f_{\text {vision}}$. Simultaneously, the descriptive text ``a photo of a [CLASS]" for $N$ classes is input into CLIP's text encoder to generate $N$  $C$-dimensional text feature vectors $f_{\text {text}}$, forming an $N \times C$ matrix $W_{\text {text}}$. Next, the logits between the image feature vector and the class text feature vectors are computed as follows: 
\begin{equation}
logits_1=f_{\text {vision}} W_{\text {text }}^{T}.
\end{equation}

Additionally, following the Tip-Adapter method \cite{zhang2021tip}, the logits for each image are computed using the cache model as:
\begin{equation}
logits_{2}=\exp \left(-\beta\left(1-f_{\text {vision}} W_{1}\right)\right) W_{2}^{T},
\end{equation}
where $\beta$ is a modulating hyperparameter. This step is equivalent to calculating the Euclidean distance between the feature $f_{\text {vision}}$ and the visual feature weights $W_1$ from the synthetic dataset and then querying the corresponding value.

Finally, $logits_{1}$ from the pre-trained CLIP and $logits_{2}$ according to values retrieved from the cache model are fused to obtain the final prediction logits:
\begin{equation}
    logits = logits_{1} + \alpha \cdot logits_{2},
\end{equation}
where $\alpha$ is a hyperparameter that can be adjusted depending on the task requirements. When there is a large discrepancy between the downstream task and the synthetic dataset, a smaller $\alpha$ is preferred, whereas a larger $\alpha$ is suitable when the differences are smaller.

The cache model is optimized using cross-entropy loss $\mathcal{L}(\cdot)$ and the Stochastic Gradient Descent (SGD) algorithm. During $t$-th round of local optimization, the one-hot vector weights $W_2$ remain frozen, and only the visual feature weights $W_1$ are updated on client $k$ as follows:
\begin{equation}
    W_{1,k}^{t+1} \leftarrow W_1^t - \eta \nabla \mathcal{L}(W_1^t;(logits,y)).
\end{equation}
After completing the $t$-th round of local training, client $k$ sends the updated parameters $W_{1,k}^{t+1}$ back to the server for aggregation, similar to Eqn.(\ref{eq:fedavg}) of FedAvg, as follows:
\begin{equation}
    W_1^{t+1} \leftarrow \sum_{k \in \mathcal{S}_t} \frac{n_{k}}{n} W_{1,k}^{t+1}.
\end{equation}

% Since the cache model has been well initialized on the class-balanced synthetic dataset, it quickly converges with just a few epochs on the client side, significantly improving the model's performance in real-world scenarios.

\section{Theoretical Analysis}\label{6}
In this section, we do a thorough analysis of the convergence of the CacheFL algorithm. To enhance readability and avoid the excessive complexity of notation, we use a simplified set of symbols for use throughout this section, which may differ from those employed in other parts of the paper.

\subsection{Definition of CacheFL}
In this work, we consider the objective function of CacheFL:
\begin{align}
    \min_{c}\left\{F(c)\doteq \sum_{k=1}^N p_k F_k(c) \right\},
\end{align}
where $N$ represents the number of clients, and $c$ represents the parameters of the cache model that need to be trained, i.e., the visual feature weights $W_1$. The local objective $F_k(\cdot)$ is defined by
\begin{align}
    F_k(c)\doteq \frac{1}{n_k}\sum_{j=1}^{n_k}\mathcal{L}(c;(x_{k,j}, y_{k,j})),
\end{align}
where the $k$-th device possesses $n_k$ training data: $(x_{k,1}, y_{k,1}),\cdots,(x_{k,n_k}, y_{k,n_k})$, and $\mathcal{L}(\cdot)$ denotes the cross-entropy loss function.

In the context of a detailed federated learning process, during the $t$-th round, the procedure unfolds as follows: The server initially selects a subset of $K$ clients at random, denoted as $\mathcal{S}_t$. Following this, the server broadcasts the parameters $c_t$ to the participating clients. Each client $k$ then initializes its local parameters as $c_t^k$ = $c_t$. Thereafter, client $k$ proceeds to conduct $E$ rounds of local updates ($i=0, \cdots, E-1$):
\begin{align}
    c_{t+i+1}^k\leftarrow c_{t+i}^k-\eta_{t+i}\nabla F_k(c_{t+i}^k, \xi_{t+i}^k),
\end{align}
where $\eta_{t+i}$ is the learning rate and the $\xi_{t+i}^k$ is sampled from the local dataset at client $k$. The server finally aggregates the received local cache models $\{c_{t+E}^k|k\in \mathcal{S}_t\}$ and updates the global model $\{c_{t+E}\}$ accordingly.

Let $\mathcal{I}_E=\{nE|n=1,2,\cdots \}$ be the set of synchronization rounds for the server-side. We can derive the update rule:
\begin{align}
    v_{t+1}^k & = c_t^k-\eta_t \nabla F_k(c_t^k, \xi_t^k), \\
    c_{t+1}^k & = \begin{cases}
        v_{t+1}^k & \text{if } t+1\notin \mathcal{I}_E, \\
        \sum_{k\in \mathcal{S}_{t+1}} p_k\frac{N}{K}v_{t+1}^k & \text{if } t+1\in \mathcal{I}_E.\\
    \end{cases}
\end{align}

The $v_{t+1}^k$ represents the result of one step SGD, which is practically inaccessible. Also, we define $\bar{g}_t=\sum_{k=1}^N p_k\nabla F_k(c_t^k)$ and $g_t=\sum_{k=1}^N p_k\nabla F_k(c_t^k, \xi_t^k)$. As a result, $\bar{v}_{t+1}=\bar{c}_t-\eta_t g_t$ and $\mathbb{E}g_t=\bar{g}_t$.

\subsection{Assumption and Lemmas}
We begin by stating the common assumptions in federated learning, along with several essential lemmas that will be used in the subsequent proofs.
\begin{assumption}
\label{assumption:1}
    $F_i(i=1, \cdots, N)$ are all L-smooth, then, for all $v$ and $w$, $F_i(v)\leq F_i(w)+(v-w)^\top \nabla F_i(w)+\frac{L}{2}\Vert v-w\Vert_2^2$.
\end{assumption}
\begin{assumption}
\label{assumption:2}
    $F_i(i=1, \cdots, N)$ are all $\mu$-strongly convex, then, for all $v$ and $w$, $F_i(v)\geq F_i(w)+(v-w)^\top \nabla F_i(w)+\frac{L}{2}\Vert v-w \Vert_2^2$.
\end{assumption}
\begin{assumption}
\label{assumption:3}
    The variance of local gradients is bounded: $\mathbb{E}\Vert \nabla F_k(c_t^k, \xi_t^k) - \nabla F_k(c_t^k)\Vert^2\leq \sigma_k^2$.
\end{assumption}
\begin{assumption}
\label{assumption:4}
    The expected squared norm of local gradients is bounded: $\mathbb{E}\Vert \nabla F_k(c_t^k, \xi_t^k)\Vert^2\leq G^2$.
\end{assumption}
Assumptions \ref{assumption:1} and \ref{assumption:2} represent canonical assumptions that are ubiquitously employed within the L2-normative paradigm of normalized linear regression coupled with softmax classification. Conversely, Assumptions \ref{assumption:3} and \ref{assumption:4} are widely adopted in the context of federated learning frameworks.

\begin{lemma}[Results of one step SGD]
    \label{lemma:1}
    When Assumption \ref{assumption:1} and \ref{assumption:2} holds. If $\eta_t\leq \frac{1}{4L}$, we have
    \begin{align}
        \mathbb{E}\Vert \bar{v}_{t+1} - c^*\Vert^2 & \leq (1-\eta_t \mu) \mathbb{E}\Vert \bar{c}_{t} - c^*\Vert^2 \notag \\
        & \quad+ \eta_t^2 \mathbb{E}\Vert g_t - \bar{g}_t\Vert^2 + 6L\eta_t^2 \Gamma \notag \\
        & \quad+ 2\mathbb{E}\sum_{k=1}^{N} p_k\Vert\bar{c}_t - c_k^*\Vert^2,
    \end{align}
    where $\Gamma = F^* - \sum_{k=1}^{N} p_k F_k^* \geq 0$.
\end{lemma}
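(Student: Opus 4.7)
The proof follows the standard one-step descent analysis used in FedAvg convergence arguments, adapted to the CacheFL update $\bar v_{t+1}=\bar c_t-\eta_t g_t$. Since $\mathbb{E}[g_t\mid c_t^1,\ldots,c_t^N]=\bar g_t$ by definition, my first move is the bias-variance split
\begin{equation*}
\mathbb{E}\Vert\bar v_{t+1}-c^*\Vert^2 = \Vert\bar c_t-\eta_t\bar g_t-c^*\Vert^2 + \eta_t^2\,\mathbb{E}\Vert g_t-\bar g_t\Vert^2,
\end{equation*}
which immediately produces the stochastic-noise term on the right-hand side of the lemma. All remaining work is to upper bound the deterministic quantity $\Vert\bar c_t-\eta_t\bar g_t-c^*\Vert^2$, which I expand as $\Vert\bar c_t-c^*\Vert^2 - 2\eta_t\langle\bar c_t-c^*,\bar g_t\rangle + \eta_t^2\Vert\bar g_t\Vert^2$.

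The squared-gradient piece is handled by Jensen's inequality together with Assumption~\ref{assumption:1}: $\Vert\bar g_t\Vert^2 \leq \sum_k p_k\Vert\nabla F_k(c_t^k)\Vert^2 \leq 2L\sum_k p_k(F_k(c_t^k)-F_k^*)$, using $\Vert\nabla F_k(w)\Vert^2\leq 2L(F_k(w)-F_k^*)$, which is a standard consequence of $L$-smoothness. For the inner product I would introduce the local iterates via the identity $\bar c_t - c^* = (\bar c_t - c_t^k) + (c_t^k - c^*)$ and split the cross term into two sums. The ``drift'' sum $-2\eta_t\sum_k p_k\langle\bar c_t-c_t^k,\nabla F_k(c_t^k)\rangle$ I bound with Young's inequality $-2\langle u,v\rangle \leq \alpha\Vert u\Vert^2 + \alpha^{-1}\Vert v\Vert^2$, which, after one more application of $L$-smoothness to the gradient norm, produces the consensus drift $\sum_k p_k\Vert\bar c_t-c_t^k\Vert^2$ together with an extra $L\eta_t^2(F_k(c_t^k)-F_k^*)$ contribution. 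The ``progress'' sum $-2\eta_t\sum_k p_k\langle c_t^k-c^*,\nabla F_k(c_t^k)\rangle$ is bounded via $\mu$-strong convexity (Assumption~\ref{assumption:2}) of each $F_k$ at $c_t^k$, yielding $2\eta_t\sum_k p_k(F_k(c^*)-F_k(c_t^k)) - \eta_t\mu\sum_k p_k\Vert c_t^k - c^*\Vert^2$; convexity of $\Vert\cdot\Vert^2$ (Jensen) then collapses the last quadratic to $-\eta_t\mu\Vert\bar c_t-c^*\Vert^2$, which is exactly what supplies the contraction factor $(1-\eta_t\mu)$ in the statement.

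The decisive and most delicate step, which I expect to be the main obstacle, is the recombination of the leftover function-value contributions, which take the schematic form $c_1L\eta_t^2\sum_k p_k(F_k(c_t^k)-F_k^*) + 2\eta_t\sum_k p_k(F_k(c^*)-F_k(c_t^k))$. The key identity is $F_k(c^*)-F_k(c_t^k) = (F_k(c^*)-F_k^*) - (F_k(c_t^k)-F_k^*)$: summing against $p_k$ turns the first portion into exactly $F^*-\sum_k p_k F_k^* = \Gamma$, which is precisely the $\Gamma$-term in the bound, while the remaining negative portion merges with the $L\eta_t^2$ piece into a coefficient of the form $-(c_2\eta_t - c_3L\eta_t^2)$ multiplying the nonnegative $\sum_k p_k(F_k(c_t^k)-F_k^*)$. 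The step-size restriction $\eta_t\leq 1/(4L)$ is precisely calibrated so that this coefficient is nonpositive, and the entire residual can be dropped. Collecting the surviving pieces yields the claimed inequality; the only remaining bookkeeping is to choose the Young's-inequality weight $\alpha$ and track constants carefully so that the coefficients $6L\eta_t^2$ on $\Gamma$ and $2$ on the drift term emerge exactly as stated.
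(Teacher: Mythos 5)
Your skeleton is the right one and matches the paper's almost step for step: the bias--variance split using $\mathbb{E}g_t=\bar g_t$, Jensen plus the smoothness bound $\Vert\nabla F_k(w)\Vert^2\leq 2L(F_k(w)-F_k^*)$ for $\Vert\bar g_t\Vert^2$, the drift/progress decomposition of the cross term with Young's inequality on the drift and strong convexity on the progress, and Jensen to collapse $-\eta_t\mu\sum_k p_k\Vert c_t^k-c^*\Vert^2$ into the contraction factor. Up to that point you are reproducing the paper's argument.

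The gap is in your final recombination. Following your scheme exactly, the leftover function-value terms are
\begin{align*}
4L\eta_t^2\sum_k p_k\bigl(F_k(c_t^k)-F_k^*\bigr) - 2\eta_t\sum_k p_k\bigl(F_k(c_t^k)-F_k(c^*)\bigr)
= -2\eta_t(1-2L\eta_t)\sum_k p_k\bigl(F_k(c_t^k)-F_k^*\bigr) + 2\eta_t\,\Gamma,
\end{align*}
and dropping the (indeed nonpositive) first term leaves you with $2\eta_t\Gamma$, \emph{not} $6L\eta_t^2\Gamma$. Since $\eta_t\leq \tfrac{1}{4L}$ forces $6L\eta_t^2\leq \tfrac{3}{2}\eta_t<2\eta_t$, no amount of constant-tracking or retuning of the Young's weight turns $2\eta_t\Gamma$ into $6L\eta_t^2\Gamma$; your bound is strictly weaker. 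This matters downstream: the $\Gamma$-term must scale as $\eta_t^2$ for the recursion $\Delta_{t+1}\leq(1-\eta_t\mu)\Delta_t+\eta_t^2(B+C)$ in Theorem~\ref{the1} to yield the $O(1/t)$ rate; an $O(\eta_t)$ heterogeneity term would only give convergence to a neighborhood. The missing idea is the paper's extra step: rewrite the residual against $F^*$ rather than $F_k^*$, i.e.\ as $-2\eta_t(1-2L\eta_t)\sum_k p_k(F_k(c_t^k)-F^*)+4L\eta_t^2\Gamma$, note that $\sum_k p_k(F_k(c_t^k)-F^*)$ is \emph{not} sign-definite, and bound it from below via first-order convexity of $F_k$ at $\bar c_t$ plus AM--GM:
\begin{align*}
\sum_k p_k\bigl(F_k(c_t^k)-F^*\bigr)\;\geq\;-\sum_k p_k\Bigl[\eta_t L\bigl(F_k(\bar c_t)-F_k^*\bigr)+\tfrac{1}{2\eta_t}\Vert c_t^k-\bar c_t\Vert^2\Bigr]+F(\bar c_t)-F^*.
\end{align*}
Multiplying by the negative coefficient and using $\eta_tL\leq\tfrac14$ converts the residual into $6L\eta_t^2\Gamma+\sum_k p_k\Vert c_t^k-\bar c_t\Vert^2$; this second copy of the divergence term is exactly why the lemma's drift coefficient is $2$ rather than the $1$ your route produces. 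Without this step the lemma as stated is not established.
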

\begin{proof}
    Considering that the $\bar{v}_{t+1}=\bar{c}_t-\eta_t g_t$ holds,
    \begin{align}
        \Vert \bar{v}_{t+1} - c^*\Vert^2 & = \Vert \bar{c}_t-\eta_t g_t - c^* - \eta_t\bar{g}_t + \eta_t\bar{g}_t\Vert^2 \\
        & = \Vert \bar{c}_t - c^* - \eta_t\bar{g}_t\Vert^2 + \eta_t^2\Vert g_t - \bar{g}_t\Vert^2 \notag \\
        &\quad - 2\eta_t\langle \bar{c}_t - c^* - \eta_t\bar{g}_t, g_t - \bar{g}_t\rangle. \label{equ:l1-12}
    \end{align}
    According to $\mathbb{E}g_t=\bar{g}_t$, $\mathbb{E}[\langle \bar{c}_t - c^* - \eta_t\bar{g}_t, g_t - \bar{g}_t\rangle]=0$. On the other side,
    \begin{align}
        \Vert \bar{c}_t - c^* - \eta_t\bar{g}_t\Vert^2 & = \Vert \bar{c}_t - c^*\Vert^2 + \eta_t^2\Vert \bar{g}_t\Vert^2 - 2\eta_t\langle \bar{c}_t - c^*, \bar{g}_t\rangle. \label{equ:l1-13}
    \end{align}
    From the Assumption \ref{assumption:1}, we have
    \begin{align}
        \Vert \nabla F_k(c_t^k)\Vert^2 \leq 2L(F_k(c_t^k) - F_k^*). \label{equ:l1-14}
    \end{align}
    The L2-norm is inherently a convex function, thus,
    \begin{align}
        \eta_t^2\Vert \bar{g}_t\Vert^2 
        & \leq \eta_t^2\sum_{k=1}^N p_k\Vert \nabla F_k(c_t^k)\Vert^2 \\
        & \leq 2L\eta_t^2 \sum_{k=1}^N p_k(F_k(c_t^k) - F_k^*). \label{equ:l1-15}
    \end{align}
    At the same time,
    \begin{align}
        - 2\eta_t\langle \bar{c}_t - c^*, \bar{g}_t\rangle & = - 2\eta_t\sum_{k=1}^N p_k\langle \bar{c}_t - c^*, \nabla F_k(c_t^k)\rangle \\
        & = - 2\eta_t\sum_{k=1}^N p_k\langle \bar{c}_t - c_t^k, \nabla F_k(c_t^k)\rangle \notag \\
        &\quad - 2\eta_t\sum_{k=1}^N p_k\langle c_t^k - c^*, \nabla F_k(c_t^k)\rangle. \label{equ:l1-17}
    \end{align}
    By the Cauchy-Schwarz inequality and the AM-GM inequality, we have
    \begin{align}
        -2\langle \bar{c}_t - c_t^k, \nabla F_k(c_t^k)\rangle \leq \frac{1}{\eta_t}\Vert \bar{c}_t - c_t^k \Vert^2 + \eta_t \Vert\nabla F_k(c_t^k) \Vert^2. \label{equ:l1-18}
    \end{align}
    From the Assumption \ref{assumption:2}, we have
    \begin{align}
        -\langle c_t^k - c^*, \nabla F_k(c_t^k)\rangle \leq -(F_k(c_t^k) - F_k(c^*)) - \frac{\mu}{2}\Vert c_t^k - c^* \Vert^2. \label{equ:l1-19}
    \end{align}
    By substituting Eqn.(\ref{equ:l1-18}) and Eqn.(\ref{equ:l1-19}) into Eqn.(\ref{equ:l1-17}), and subsequently incorporating the resulting expressions together with Eqn.(\ref{equ:l1-15}) into Eqn.(\ref{equ:l1-13}), we obtain
    \begin{align}
        & \Vert \bar{c}_t - c^* - \eta_t\bar{g}_t\Vert^2 \notag \\
        & \leq \Vert \bar{c}_t - c^*\Vert^2 + 2L\eta_t^2 \sum_{k=1}^N p_k(F_k(c_t^k) - F_k^*) \notag \\
        & \quad + \eta_t\sum_{k=1}^N p_k\left(\frac{1}{\eta_t}\Vert \bar{c}_t - c_t^k \Vert^2 + \eta_t \Vert\nabla F_k(c_t^k) \Vert^2 \right) \notag \\
        & \quad - 2\eta_t\sum_{k=1}^N p_k\left(F_k(c_t^k) - F_k(c^*) + \frac{\mu}{2}\Vert c_t^k - c^* \Vert^2 \right) \label{equ:l1-20}\\
        & = (1-\mu\eta_t)\Vert c_t^k - c^* \Vert^2 + \sum_{k=1}^N p_k\Vert \bar{c}_t - c_t^k \Vert^2 \notag \\
        & \quad + 4L\eta_t^2\sum_{k=1}^N p_k(F_k(c_t^k) - F_k^*) \notag \\
        & \quad - 2\eta_t\sum_{k=1}^N p_k(F_k(c_t^k) - F_k(c^*)).\label{equ:l1-21}
    \end{align}
    Eqn.(\ref{equ:l1-20}) to Eqn.(\ref{equ:l1-21}) are derived from Eqn.(\ref{equ:l1-14}).
    By the notation $\Gamma = F^* - \sum_{k=1}^{N} p_k F_k^* \geq 0$, we have
    \begin{align}
        & \Vert \bar{c}_t - c^* - \eta_t\bar{g}_t\Vert^2 \notag \\
        & \leq (1-\mu\eta_t)\Vert c_t^k - c^* \Vert^2 + \sum_{k=1}^N p_k\Vert \bar{c}_t - c_t^k \Vert^2 \notag \\
        & \quad -2\eta_t(1-2L\eta_t)\sum_{k=1}^N p_k(F_k(c_t^k) - F_k^*) \notag \\
        & \quad + 2\eta_t\sum_{k=1}^N p_k(F_k(c^*) - F_k^*)\\
        & = (1-\mu\eta_t)\Vert c_t^k - c^* \Vert^2 + \sum_{k=1}^N p_k\Vert \bar{c}_t - c_t^k \Vert^2
        \notag \\
        & \quad - 2\eta_t(1-2L\eta_t) \sum_{k=1}^N p_k(F_k(c_t^k) - F^*) \notag \\
        & \quad + 4L\eta_t^2\Gamma. \label{equ:l1-23}
    \end{align}
    By the convexity of $F_k(\cdot)$, the AM-GM inequality and Eqn.(\ref{equ:l1-14}), we obtain
    \begin{align}
        & \sum_{k=1}^N p_k(F_k(c_t^k) - F^*) \notag \\
        & = \sum_{k=1}^N p_k(F_k(c_t^k) - F_k(\bar{c}_t)) + \sum_{k=1}^N p_k(F_k(\bar{c}_t) - F^*) \\
        & \geq \sum_{k=1}^N p_k \langle \nabla F_k(\bar{c}_t), \bar{c}_t^k-\bar{c}_t\rangle + F(\bar{c}_t) - F^*\\
        & \geq -\frac{1}{2} \sum_{k=1}^N p_k\left[\eta_t\Vert\nabla F_k(\bar{c}_t)\Vert^2 + \frac{1}{\eta_t}\Vert\nabla c_t^k - \bar{c}_t\Vert^2 \right] \notag \\
        & \quad + F(\bar{c}_t) - F^*\\
        & \geq - \sum_{k=1}^N p_k\left[\eta_t L(F_k(\bar{c}_t) - F_k^*) + \frac{1}{2\eta_t}\Vert c_t^k - \bar{c}_t\Vert^2\right] 
        \notag \\
        & \quad + F(\bar{c}_t) - F^*.
    \end{align}
    Taking into account the following facts: $\eta_t L-1\leq -\frac{3}{4}<0$ and $\sum_{k=1}^N p_k(F_k(\bar{c}_t) - F^*)=F(\bar{c}_t) - F^*\geq 0$, $\Gamma\geq 0$ and $4L\eta^2+2\eta_t(1-2L\eta_t)\eta_t L = 6L\eta_t^2-4L^2\eta_t^3\leq 6L\eta_t^2$, we have
    \begin{align}
        & - 2\eta_t(1-2L\eta_t) \sum_{k=1}^N p_k(F_k(c_t^k) - F^*) + 4L\eta_t^2\Gamma \notag \\
        & \leq 2\eta_t(1-2L\eta_t) \sum_{k=1}^N p_k\left[\eta_t L(F_k(\bar{c}_t) - F_k^*) + \frac{1}{2\eta_t}\Vert c_t^k - \bar{c}_t\Vert^2\right] \notag \\
        &\quad - 2\eta_t(1-2L\eta_t)(F(\bar{c}_t) - F^*) + 4L\eta_t^2\Gamma\\
        & = 2\eta_t(1-2L\eta_t)(\eta_t L-1)\sum_{k=1}^N p_k(F_k(\bar{c}_t) - F^*) \notag \\
        &\quad + (4L\eta_t^2 + 2\eta_t(1-2L\eta_t)\eta_t L)\Gamma \notag \\
        &\quad + (1-2L\eta_t)\sum_{k=1}^N p_k \Vert c_t^k - \bar{c}_t\Vert^2\\
        & \leq 6L\eta_t^2\Gamma + \sum_{k=1}^N p_k\Vert c_t^k - \bar{c}_t\Vert^2. \label{equ:l1-30}
    \end{align}
    Substituting Eqn.(\ref{equ:l1-30}) into Eqn.(\ref{equ:l1-23}), we have
    \begin{align}
        &\Vert \bar{c}_t - c^* - \eta_t\bar{g}_t\Vert^2 \notag \\ 
        &\leq (1-\mu\eta_t)\Vert c_t^k - c^* \Vert^2 + 2\sum_{k=1}^N p_k\Vert \bar{c}_t - c_t^k \Vert^2 + 6L\eta_t^2\Gamma. \label{equ:l1-31}
    \end{align}
    Using Eqn.(\ref{equ:l1-31}) and taking expectation on both sides of Eqn.(\ref{equ:l1-12}), we have completed the proof of Lemma \ref{lemma:1}.
\end{proof}

\begin{lemma}[Bounding the variance]
    \label{lemma:2}
    When Assumption \ref{assumption:3} holds. we have
    \begin{align}
        \mathbb{E}\Vert g_t-\bar{g}_t \Vert^2 \leq \sum_{k=1}^N p_k^2\sigma^2.
    \end{align}
\end{lemma}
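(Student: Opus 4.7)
The plan is to exploit two facts: (i) the stochastic gradient noise on each client is centered, i.e.\ $\mathbb{E}[\nabla F_k(c_t^k,\xi_t^k)-\nabla F_k(c_t^k)]=0$ by definition of a stochastic estimate, and (ii) the local samples $\xi_t^1,\ldots,\xi_t^N$ are drawn independently across clients, so the noise vectors on different clients are mutually independent (conditionally on $c_t^1,\ldots,c_t^N$). Once those two facts are in hand, the lemma reduces to the textbook identity that the variance of a weighted sum of independent centered random vectors is the weighted sum (with squared weights) of the individual variances.

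Concretely, I would first rewrite
\begin{align}
    g_t - \bar{g}_t = \sum_{k=1}^{N} p_k\bigl(\nabla F_k(c_t^k,\xi_t^k) - \nabla F_k(c_t^k)\bigr),
\end{align}
and set $X_k \doteq \nabla F_k(c_t^k,\xi_t^k) - \nabla F_k(c_t^k)$ so that $\mathbb{E}X_k=0$ and, by Assumption~\ref{assumption:3}, $\mathbb{E}\Vert X_k\Vert^2\leq \sigma_k^2$. Expanding the squared norm gives
\begin{align}
    \mathbb{E}\Vert g_t - \bar{g}_t\Vert^2
    = \sum_{k=1}^N p_k^2\,\mathbb{E}\Vert X_k\Vert^2 + \sum_{k\neq \ell} p_k p_\ell\,\mathbb{E}\langle X_k, X_\ell\rangle .
\end{align}
Because $\xi_t^k$ and $\xi_t^\ell$ are sampled independently for $k\neq \ell$, the cross terms factor as $\mathbb{E}\langle X_k, X_\ell\rangle = \langle \mathbb{E}X_k, \mathbb{E}X_\ell\rangle = 0$. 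What remains is
\begin{align}
    \mathbb{E}\Vert g_t - \bar{g}_t\Vert^2 \leq \sum_{k=1}^N p_k^2 \sigma_k^2,
\end{align}
which is the claim (up to what appears to be a minor notational issue in the lemma statement, where $\sigma^2$ should be read as $\sigma_k^2$ from Assumption~\ref{assumption:3}).

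There is no real obstacle to overcome here; the only subtlety worth stating carefully is the independence of the $\xi_t^k$ across clients, which is what kills the cross terms and prevents the bound from degrading to $\bigl(\sum_k p_k \sigma_k\bigr)^2$. I would make that independence explicit in one sentence before expanding the square, and otherwise the proof is a two-line computation.
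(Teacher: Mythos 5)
Your proposal is correct and follows essentially the same route as the paper: both write $g_t-\bar g_t$ as the $p_k$-weighted sum of centered per-client noise terms, drop the cross terms by independence across clients, and invoke Assumption~\ref{assumption:3}; you merely make the independence step and the zero cross-terms explicit, which the paper leaves implicit. Your remark that the bound should read $\sum_k p_k^2\sigma_k^2$ (the paper writes $\sigma^2$) is a fair observation about the statement's notation, not a difference in the argument.
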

\begin{proof}
    Because the variance of local gradients is bounded, then
    \begin{align}
        \mathbb{E}\Vert g_t-\bar{g}_t \Vert^2 & = \mathbb{E}\left\| \sum_{k=1}^N p_k(\nabla F_k(c_t^k, \xi_t^k) - \nabla F_k(c_t^k)) \right\|^2 \\
        & = \sum_{k=1}^N p_k^2\mathbb{E}\left\| \nabla F_k(c_t^k, \xi_t^k) - \nabla F_k(c_t^k) \right\|^2 \\
        & \leq \sum_{k=1}^N p_k^2\sigma^2.
    \end{align}
\end{proof}

\begin{lemma}[Bounding the divergence of $\{c_t^k\}$]
    \label{lemma:3}
    When Assumption \ref{assumption:4} holds. If $\eta_t$ is non-increasing and $\eta_t\leq 2\eta_{t+E}$ for all $t\geq 0$, we have
    \begin{align}
        \mathbb{E}\left[\sum_{k=1}^N p_k\left\| \bar{c}_t-c_t^k \right\|^2 \right] \leq 4\eta_2(E-1)^2G^2.
    \end{align}
\end{lemma}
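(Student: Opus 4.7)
The plan is to observe that each client's parameter vector $c_t^k$ has only had at most $E-1$ local SGD steps to drift since the last aggregation, so the divergence $\sum_k p_k \Vert \bar{c}_t - c_t^k\Vert^2$ should be controllable by the accumulated gradient-norm mass in that short window. First I would let $t_0$ denote the largest element of $\mathcal{I}_E \cup \{0\}$ satisfying $t_0 \leq t$; because consecutive synchronization rounds are exactly $E$ apart, $t - t_0 \leq E - 1$, and by definition of the aggregation step $c_{t_0}^k = \bar{c}_{t_0}$ for every client $k$.

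Next I would invoke a weighted variance-decomposition identity. Expanding $\sum_k p_k \Vert c_t^k - \bar{c}_{t_0}\Vert^2$ around $\bar{c}_t$ and using $\sum_k p_k (c_t^k - \bar{c}_t) = 0$ gives
\begin{align*}
\sum_{k=1}^N p_k \Vert c_t^k - \bar{c}_t \Vert^2 = \sum_{k=1}^N p_k \Vert c_t^k - \bar{c}_{t_0}\Vert^2 - \Vert \bar{c}_t - \bar{c}_{t_0}\Vert^2 \leq \sum_{k=1}^N p_k \Vert c_t^k - \bar{c}_{t_0}\Vert^2,
\end{align*}
so it suffices to bound each $\mathbb{E}\Vert c_t^k - \bar{c}_{t_0}\Vert^2$. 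Since no aggregation occurs in the interval $(t_0, t]$, unrolling the local recursion produces $c_t^k - \bar{c}_{t_0} = -\sum_{i=t_0}^{t-1}\eta_i \nabla F_k(c_i^k, \xi_i^k)$. Applying Jensen's inequality (or Cauchy--Schwarz) to a sum of at most $E-1$ vectors and then Assumption~\ref{assumption:4} yields
\begin{align*}
\mathbb{E}\Vert c_t^k - \bar{c}_{t_0}\Vert^2 \leq (t-t_0)\sum_{i=t_0}^{t-1}\eta_i^2\, \mathbb{E}\Vert \nabla F_k(c_i^k, \xi_i^k)\Vert^2 \leq (E-1)\, G^2 \sum_{i=t_0}^{t-1}\eta_i^2.
\end{align*}

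The main obstacle is the step-size manipulation: I need to upgrade the trivial bound $\eta_i \leq \eta_0$ into something comparable to $\eta_t$, and this is exactly where the two hypotheses on the learning-rate sequence interact. Because $\eta$ is non-increasing and $t \leq t_0 + E - 1 < t_0 + E$, I obtain $\eta_{t_0+E} \leq \eta_t$; the assumption $\eta_{t_0} \leq 2\eta_{t_0+E}$ then gives $\eta_{t_0} \leq 2\eta_t$, and non-increasingness once more produces $\eta_i \leq \eta_{t_0} \leq 2\eta_t$ for every $i \in [t_0, t-1]$. Substituting into the previous display yields $\sum_{i=t_0}^{t-1}\eta_i^2 \leq (E-1)(2\eta_t)^2 = 4(E-1)\eta_t^2$, and combining with the variance-decomposition step together with $\sum_k p_k = 1$ delivers the bound $4\eta_t^2(E-1)^2 G^2$ (reading the printed $\eta_2$ in the statement as $\eta_t^2$).
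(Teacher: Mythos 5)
Your proposal is correct and follows essentially the same route as the paper: reduce to the deviation from the last synchronization point $\bar{c}_{t_0}$ via the variance decomposition, unroll at most $E-1$ local SGD steps with Cauchy--Schwarz and Assumption~4, and use $\eta_{t_0}\leq 2\eta_{t_0+E}\leq 2\eta_t$ to absorb the step sizes. Your reading of the printed $\eta_2$ as a typo for $\eta_t^2$ matches how the paper actually uses the lemma, and your write-up of the step-size argument is in fact more explicit than the paper's.
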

\begin{proof}
    For any $t\geq 0$, there exists a $t_0\leq t$, s.t. $t-t_0\leq E-1$ and $c_{t_0}^k=\bar{c}_{t_0}$ for all $k=1, 2, \cdots, N$.
    \begin{align}
        & \mathbb{E}\left[\sum_{k=1}^N p_k\left\| \bar{c}_t-c_t^k \right\|^2 \right] \notag \\
        & = \mathbb{E}\left[\sum_{k=1}^N p_k\left\| (c_t^k-\bar{c}_{t_0}) - (\bar{c}_t-\bar{c}_{t_0}) \right\|^2 \right] \\
        & \leq \mathbb{E}\left[\sum_{k=1}^N p_k\left\| c_t^k-\bar{c}_{t_0} \right\|^2 \right] \\
        & \leq \sum_{k=1}^N p_k\mathbb{E}\left[\sum_{t=t_0}^{t-1}(E-1)\eta_t^2\left\|\nabla F_k(c_t^k, \xi_t^k) \right\|^2 \right] \\
        & \leq \sum_{k=1}^N p_k\sum_{t=t_0}^{t-1}(E-1)\eta_{t_0}^2 G^2 \\
        & \leq \sum_{k=1}^N p_k\eta_{t_0}^2(E-1)^2G^2 \\
        & \leq 4\eta_t^2(E-1)^2G^2.
    \end{align}
\end{proof}

\begin{lemma}[Unbiased sampling scheme]
    \label{lemma:4}
    If $t+1\in\mathcal{I}_E$, we have
    \begin{align}
        \mathbb{E}_{\mathcal{S}_t}(\bar{c}_{t+1}) = \bar{v}_{t+1}.
    \end{align}
\end{lemma}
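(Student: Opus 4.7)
\textbf{Plan for Lemma \ref{lemma:4}.} The statement concerns the aggregation step at a synchronization round. When $t+1\in\mathcal{I}_E$, the global iterate is given by
\[
\bar{c}_{t+1} \;=\; \sum_{k\in\mathcal{S}_{t+1}} p_k\,\frac{N}{K}\,v_{t+1}^k,
\]
while $\bar{v}_{t+1}=\sum_{k=1}^N p_k v_{t+1}^k$ plays the role of the idealized full-participation average. My plan is to rewrite the sum over the random subset $\mathcal{S}_{t+1}$ as a sum over all $N$ clients weighted by an indicator variable, and then take the expectation with respect to the uniform sampling of $\mathcal{S}_{t+1}$.

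Concretely, I would first express
\[
\bar{c}_{t+1} \;=\; \sum_{k=1}^{N} \mathbb{1}\{k\in\mathcal{S}_{t+1}\}\,p_k\,\frac{N}{K}\,v_{t+1}^k.
\]
Because $\mathcal{S}_{t+1}$ is chosen as a uniformly random subset of size $K$ from the $N$ clients (the standard sampling scheme used in Algorithm \ref{fedavg1} and assumed throughout the FedAvg analysis), every client has inclusion probability $\Pr[k\in\mathcal{S}_{t+1}]=K/N$. Also, the local iterates $v_{t+1}^k$ depend only on the past SGD noise on device $k$ and not on the upcoming sampling set $\mathcal{S}_{t+1}$, so they can be treated as constants when taking $\mathbb{E}_{\mathcal{S}_{t+1}}[\cdot]$. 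Linearity of expectation then yields
\[
\mathbb{E}_{\mathcal{S}_{t+1}}[\bar{c}_{t+1}] \;=\; \sum_{k=1}^{N} \frac{K}{N}\cdot p_k\,\frac{N}{K}\,v_{t+1}^k \;=\; \sum_{k=1}^{N} p_k\,v_{t+1}^k \;=\; \bar{v}_{t+1},
\]
where the factors $N/K$ (the reweighting built into the aggregation) and $K/N$ (the marginal inclusion probability) cancel exactly.

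The only point that requires care is justifying that the marginal inclusion probability is exactly $K/N$; this is immediate for uniform sampling without replacement and is the reason the paper inserts the rescaling $N/K$ into the aggregation rule in the first place. Once that is noted, the proof is a one-line computation, so I do not anticipate any real obstacle — the lemma is essentially a verification that the $N/K$-rescaled partial sum is an unbiased estimator of the full weighted average $\bar{v}_{t+1}$ under the stated sampling scheme.
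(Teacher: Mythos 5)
Your proposal is correct and matches the paper's argument in substance: the paper computes $\mathbb{E}_{\mathcal{S}}\sum_{k\in\mathcal{S}}x_k = K\,\mathbb{E}\,x_{i_1} = \frac{K}{N}\sum_{k=1}^N x_k$ by symmetry of the sampled indices, which is the same marginal-inclusion-probability cancellation you carry out via indicator variables, applied to $x_k = p_k\frac{N}{K}v_{t+1}^k$. The only (cosmetic) discrepancy is that the paper's subscript $\mathcal{S}_t$ in the lemma statement should really be $\mathcal{S}_{t+1}$ to match the update rule, which you implicitly correct.
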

\begin{proof}
    We sample $x_k$ with probability $q_k=\frac{1}{N}$ for each time. Let $\mathcal{S}_t = \{i_1, \cdots, i_K\}\subset [N]$. Then
    \begin{align}
    \mathbb{E}_{\mathcal{S}_t}\sum_{k\in\mathcal{S}_t}x_k & = \mathbb{E}_{\mathcal{S}_t}\sum_{k=1}^K x_{i_k} = K\mathbb{E}_{\mathcal{S}_t}x_{i_1} 
    \notag \\
    & = K\sum_{k=1}^N q_k x_k = \frac{K}{N}\sum_{k=1}^N x_k.
    \end{align}
\end{proof}

\begin{lemma}[Bounding the variance of $\bar{c}_t$]
    \label{lemma:5}
    If $\eta_t$ is non-increasing and $\eta_t\leq 2\eta_{t+E}$ for all $t\geq 0$, the expected difference between $\bar{v}_{t+1}$ and $\bar{c}_{t+1}$ is bounded by
    \begin{align}
        \mathbb{E}_{\mathcal{S}_t}\left\|\bar{v}_{t+1} - \bar{c}_{t+1}\right\|^2\leq \frac{N-K}{N-1}\frac{4}{K}\eta_t^2E^2G^2.
    \end{align}
\end{lemma}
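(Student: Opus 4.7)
The plan is to mimic Li \emph{et al.}'s \cite{li2020federated} variance analysis for FedAvg under uniform sampling without replacement: recognize $\bar{c}_{t+1}$ as the sample mean of $K$ client updates drawn uniformly without replacement, and bound its deviation from the population mean $\bar{v}_{t+1}$ via the classical finite-population variance identity, then control each summand using how far the local iterates have drifted since the last synchronization.

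First I would fix the most recent synchronization index $t_0\le t$ satisfying $t-t_0\le E-1$, so that $c_{t_0}^k=\bar{c}_{t_0}$ for every client $k$. Because subtracting the common constant $\bar{c}_{t_0}$ from every $v_{t+1}^k$ does not change the difference $\bar{v}_{t+1}-\bar{c}_{t+1}$, I would work with the centered vectors $u_k := v_{t+1}^k-\bar{c}_{t_0}$. Taking the choice $p_k=1/N$ consistent with the uniform sampling scheme of Lemma \ref{lemma:4}, the aggregation rule reduces to $\bar{c}_{t+1}=\tfrac{1}{K}\sum_{k\in\mathcal{S}_t}v_{t+1}^k$, and Lemma \ref{lemma:4} already provides the unbiasedness $\mathbb{E}_{\mathcal{S}_t}\bar{c}_{t+1}=\bar{v}_{t+1}$.

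Next I would apply the variance identity for sampling $K$ indices without replacement from $[N]$, whose characteristic finite-population correction $\tfrac{N-K}{N-1}$ is exactly what produces the prefactor in the claim:
\[
\mathbb{E}_{\mathcal{S}_t}\Bigl\|\tfrac{1}{K}\sum_{k\in\mathcal{S}_t}u_k-\tfrac{1}{N}\sum_{k=1}^N u_k\Bigr\|^2 \le \frac{N-K}{N-1}\cdot\frac{1}{KN}\sum_{k=1}^N \|u_k\|^2,
\]
using $\sum_k \|u_k-\bar u\|^2\le \sum_k\|u_k\|^2$. Finally I would bound each $\|u_k\|^2$ by expanding the local SGD recursion $v_{t+1}^k-\bar{c}_{t_0}=-\sum_{s=t_0}^{t}\eta_s\nabla F_k(c_s^k,\xi_s^k)$, applying Jensen/Cauchy--Schwarz to move the norm inside a sum of at most $E$ terms, and invoking Assumption \ref{assumption:4} to bound each gradient norm squared by $G^2$; this gives $\|u_k\|^2\le E^2\eta_{t_0}^2 G^2$. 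The non-increasing step-size assumption together with $\eta_t\le 2\eta_{t+E}$ yields $\eta_{t_0}\le 2\eta_t$ (since $t\le t_0+E$), so $\|u_k\|^2\le 4E^2\eta_t^2 G^2$. Averaging over $k$ and combining with the variance bound produces the stated inequality $\tfrac{N-K}{N-1}\tfrac{4}{K}\eta_t^2E^2G^2$.

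The main technical obstacle is obtaining the precise $\tfrac{N-K}{N-1}$ factor: one must carefully verify the finite-population variance identity for sampling without replacement, rather than the weaker $\tfrac{1}{K}$ bound that arises under sampling with replacement. Once that identity is in hand, the remainder is a routine triangle-inequality/Jensen argument combined with the bounded-gradient assumption.
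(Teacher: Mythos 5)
Your proposal is correct and follows essentially the same route as the paper: both derive the finite-population correction $\tfrac{N-K}{N-1}$ from the without-replacement sampling probabilities $\mathbb{P}(i\in\mathcal{S}_{t+1})=\tfrac{K}{N}$ and $\mathbb{P}(i,j\in\mathcal{S}_{t+1})=\tfrac{K(K-1)}{N(N-1)}$ together with the centering identity, and then bound the per-client deviation from the last synchronization point $\bar{c}_{t_0}$ by $4\eta_t^2E^2G^2$ via Jensen, Assumption \ref{assumption:4}, and $\eta_{t_0}\leq 2\eta_t$. No substantive differences to report.
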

\begin{proof}
    Suppose $p_i=\frac{1}{N}, i=1, 2, \cdots, N$. We have $\bar{c}_{t+1} = \frac{1}{K}\sum_{k=1}^K v_{t+1}^{i_k}$.
    \begin{align}
        &\mathbb{E}_{\mathcal{S}_t}\left\|\bar{v}_{t+1} - \bar{c}_{t+1}\right\|^2 \notag \\
        & = \mathbb{E}_{\mathcal{S}_t}\left\|\frac{1}{K}\sum_{i\in \mathcal{S}_{t+1}}v^i_{t+1} - \bar{v}_{t+1}\right\|^2 \\
        & = \frac{1}{K^2}\mathbb{E}_{\mathcal{S}_t}\left\|\sum_{i=1}^N\mathbb{I}\{i\in \mathcal{S}_t\} (v_{t+1}^i-\bar{v}_{t+1})\right\|^2 \\
        % & = \frac{1}{K^2}\left[\sum_{i\in[N]}\mathbb{P}(i\in \mathcal{S}_{t+1})\left\|v_{t+1}^i - \bar{v}_{t+1} \right\|^2 + \sum_{i\neq j}\mathbb{P}(i, j\in \mathcal{S}_{t+1})\langle v_{t+1}^i - \bar{v}_{t+1}, v_{t+1}^j - \bar{v}_{t+1}\rangle  \right] \\
        & = \frac{1}{K^2}\sum_{i\in[N]}\mathbb{P}(i\in \mathcal{S}_{t+1})\left\|v_{t+1}^i - \bar{v}_{t+1} \right\|^2 \notag \\
        & \quad + \frac{1}{K^2}\sum_{i\neq j}\mathbb{P}(i, j\in \mathcal{S}_{t+1})\langle v_{t+1}^i - \bar{v}_{t+1}, v_{t+1}^j - \bar{v}_{t+1}\rangle \\
        & = \frac{1}{KN}\sum_{i=1}^N\left\|v_{t+1}^i - \bar{v}_{t+1} \right\|^2 \notag \\
        & \quad + \sum_{i\neq j}\frac{K-1}{KN(N-1)}\langle v_{t+1}^i - \bar{v}_{t+1}, v_{t+1}^j - \bar{v}_{t+1}\rangle \\
        & = \frac{1}{K(N-1)}\left(1-\frac{K}{N}\right)\sum_{i=1}^N\left\|v_{t+1}^i - \bar{v}_{t+1} \right\|^2,
    \end{align}
    which is using the following equalities: $\mathbb{P}(i\in \mathcal{S}_{t+1}) = \frac{K}{N}$ and $\mathbb{P}(i, j\in \mathcal{S}_{t+1}) = \frac{K(K-1)}{N(N-1)}$ for all $i\neq j$, and $\sum_{i\in[N]}\left\|v_{t+1}^i - \bar{v}_{t+1} \right\|^2 + \sum_{i\neq j}\langle v_{t+1}^i - \bar{v}_{t+1}, v_{t+1}^j - \bar{v}_{t+1}\rangle = 0$.
    \begin{align}
        & \mathbb{E}\left\|\bar{v}_{t+1} - \bar{c}_{t+1}\right\|^2 \notag \\
        & = \frac{N}{K(N-1)}\left(1-\frac{K}{N}\right)\mathbb{E}\left[\frac{1}{N}\sum_{i=1}^N\left\|v_{t+1}^i - \bar{v}_{t+1} \right\|^2 \right]\\
        & \leq \frac{N}{K(N-1)}\left(1-\frac{K}{N}\right)\mathbb{E}\left[\frac{1}{N}\sum_{i=1}^N\left\|v_{t+1}^i - \bar{c}_{t_0} \right\|^2 \right]\\
        & \leq \frac{N}{K(N-1)}\left(1-\frac{K}{N}\right)4\eta_t^2E^2G^2.
    \end{align}
\end{proof}

\subsection{The Convergence of CacheFL}
Theorem \ref{the1} demonstrates the convergence of CacheFL.
\begin{theorem}
    \label{the1}
    Suppose Assumptions 1 to 4 hold. Choose $\gamma = \max\{8\frac{L}{\mu}-1\}$ and the learning rate $\eta_t = \frac{2}{\mu}\frac{1}{t+\gamma}$. Then the objective function of CacheFL satisfies
    \begin{align}
        \mathbb{E}[F(\bar{c}_t)] - F^*\leq \frac{2L}{(t+\gamma)\mu}\left(\frac{B+C}{\mu} + 2L\left\|\bar{c}_{1} - c^* \right\|^2\right),
    \end{align}
    where
    \begin{align}
        B & = \sum_{k=1}^N p_k^2\sigma^2 + 6L\Gamma + 8(E-1)^2G^2,\\
        C & = \frac{N-K}{N-1}\frac{4}{K}E^2G^2.
    \end{align}
\end{theorem}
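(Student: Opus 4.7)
The plan is to assemble a one-step contraction for $\Delta_t \doteq \mathbb{E}\Vert \bar{c}_t - c^*\Vert^2$ from the five lemmas, then run an induction to establish $O(1/t)$ decay, and finally convert the iterate bound to an objective-value bound using $L$-smoothness. I would begin by splitting the analysis into the two sub-cases dictated by the CacheFL update rule. At non-synchronization steps $\bar{c}_{t+1}=\bar{v}_{t+1}$ and nothing extra happens. At synchronization steps $\bar{c}_{t+1}$ is an aggregation over the random subset $\mathcal{S}_{t+1}$; Lemma \ref{lemma:4} tells us this aggregation is unbiased, so
\begin{align}
    \mathbb{E}\Vert \bar{c}_{t+1} - c^*\Vert^2 = \mathbb{E}\Vert \bar{v}_{t+1} - c^*\Vert^2 + \mathbb{E}\Vert \bar{c}_{t+1} - \bar{v}_{t+1}\Vert^2,
\end{align}
with the second term only contributing on synchronization rounds.

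Next, I would bound $\mathbb{E}\Vert \bar{v}_{t+1} - c^*\Vert^2$ by invoking Lemma \ref{lemma:1}, then substituting Lemma \ref{lemma:2} to control $\mathbb{E}\Vert g_t - \bar{g}_t\Vert^2 \leq \sum_k p_k^2\sigma^2$ and Lemma \ref{lemma:3} to control $\mathbb{E}\sum_k p_k\Vert \bar{c}_t - c_t^k\Vert^2 \leq 4\eta_t^2(E-1)^2 G^2$. Combining these yields
\begin{align}
    \mathbb{E}\Vert \bar{v}_{t+1} - c^*\Vert^2 \leq (1-\eta_t\mu)\Delta_t + \eta_t^2 B.
\end{align}
Folding in the sampling-variance contribution of Lemma \ref{lemma:5}, namely $\mathbb{E}\Vert \bar{c}_{t+1} - \bar{v}_{t+1}\Vert^2 \leq \eta_t^2 C$, gives a unified recursion
\begin{align}
    \Delta_{t+1} \leq (1-\eta_t\mu)\Delta_t + \eta_t^2(B+C),
\end{align}
which is valid in both cases since $C$ can be appended as a (possibly loose) upper bound at non-synchronization steps.

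With this recursion in hand, I would substitute the prescribed learning rate $\eta_t = \frac{2}{\mu(t+\gamma)}$ with $\gamma = 8L/\mu - 1$, which also guarantees $\eta_t \leq \frac{1}{4L}$ as required by Lemma \ref{lemma:1}. A standard induction then shows $\Delta_t \leq \frac{v}{t+\gamma}$ for the constant $v \doteq \max\bigl\{\frac{4(B+C)}{\mu^2},\,(\gamma+1)\Vert \bar{c}_1 - c^*\Vert^2\bigr\}$: the base case follows from the definition of $v$, and the inductive step reduces algebraically to the inequality $(t+\gamma-1)(t+\gamma+1)\leq (t+\gamma)^2$, which is immediate. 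Finally, because $F$ is $L$-smooth and $\nabla F(c^*)=0$, Assumption \ref{assumption:1} yields $F(\bar{c}_t) - F^* \leq \frac{L}{2}\Delta_t$; using $v \leq \frac{4(B+C)}{\mu^2} + \frac{8L}{\mu}\Vert \bar{c}_1 - c^*\Vert^2$ (via $\gamma+1 = 8L/\mu$) converts this into the claimed bound.

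The main obstacle, I anticipate, is bookkeeping rather than conceptual: one must verify that the particular constant $\frac{2L}{\mu(t+\gamma)}\bigl(\frac{B+C}{\mu} + 2L\Vert \bar{c}_1 - c^*\Vert^2\bigr)$ stated in the theorem emerges cleanly from combining $\frac{L}{2}\cdot\frac{v}{t+\gamma}$ with the two candidates in the max defining $v$, and that the non-increasing, slowly-varying condition $\eta_t \leq 2\eta_{t+E}$ needed by Lemmas \ref{lemma:3} and \ref{lemma:5} holds under the chosen schedule. Beyond this, the argument is a direct synthesis of the five lemmas and follows the established FedAvg convergence template.
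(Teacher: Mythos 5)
Your proposal is correct and follows essentially the same route as the paper: the same orthogonal decomposition of $\Vert\bar{c}_{t+1}-c^*\Vert^2$ justified by Lemma \ref{lemma:4}, the same assembly of Lemmas \ref{lemma:1}--\ref{lemma:3} and \ref{lemma:5} into the recursion $\Delta_{t+1}\leq(1-\eta_t\mu)\Delta_t+\eta_t^2(B+C)$, the same induction with $v=\max\{\frac{\beta^2(B+C)}{\beta\mu-1},(\gamma+1)\Delta_1\}$ specialized to $\beta=2/\mu$, and the same final conversion via $L$-smoothness. The constants you report match the paper's exactly.
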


\begin{proof}
    Note that
    \begin{align}
        \left\|\bar{c}_{t+1} - c^* \right\|^2 & = \left\|\bar{c}_{t+1} - \bar{v}_{t+1} + \bar{v}_{t+1} - c^* \right\|^2 \\
        & = \left\|\bar{c}_{t+1} - \bar{v}_{t+1} \right\|^2 + \left\|\bar{v}_{t+1} - c^* \right\|^2 \notag \\
        & \quad + 2\langle \bar{c}_{t+1} - \bar{v}_{t+1}, \bar{v}_{t+1} - c^*\rangle.
    \end{align}
    Taking expectation over $\mathcal{S}_{t+1}$, the inner product is zero due to the unbiasedness of $\bar{c}_{t+1}$, which has been proved in Lemma \ref{lemma:4}.
    
    If $t+1\notin \mathcal{I}_E$, $A_1$ vanishes because of $\bar{c}_{t+1} = \bar{v}_{t+1}$. We use Lemma \ref{lemma:1}, Lemma \ref{lemma:2}, Lemma \ref{lemma:3} to bound the second terms:
    \begin{align}
        \mathbb{E}\left\|\bar{c}_{t+1} - c^* \right\|^2 & = \mathbb{E}\left\|\bar{v}_{t+1} - c^* \right\|^2 \\
        & \leq (1-\eta_t\mu)\mathbb{E}\left\|\bar{c}_{t} - c^* \right\|^2 + \eta_t^2B,
    \end{align}
    where 
    \begin{align}
        B = \sum_{k=1}^N p_k^2\sigma^2 + 6L\Gamma + 8(E-1)^2G^2.
    \end{align}

    If $t+1\in \mathcal{I}_E$, we additionally use Lemma \ref{lemma:5} to bound the first terms:
    \begin{align}
        \mathbb{E}\left\|\bar{c}_{t+1} - c^* \right\|^2 & = \mathbb{E}\left\|\bar{c}_{t+1} - \bar{v}_{t+1} \right\|^2 + \mathbb{E}\left\|\bar{v}_{t+1} - c^* \right\|^2 \\
        & \leq (1-\eta_t\mu)\mathbb{E}\left\|\bar{c}_{t} - c^* \right\|^2 + \eta_t^2(B + C),
    \end{align}
    where
    \begin{align}
        C = \frac{N-K}{N-1}\frac{4}{K}E^2G^2.
    \end{align}
    Let $\Delta_t = \mathbb{E}\left\|\bar{c}_{t} - c^* \right\|^2$, we have
    \begin{align}
        \Delta_{t+1}\leq (1-\eta_t\mu)\Delta_t + \eta_t^2(B + C).
    \end{align}
    For decreasing learning rates, $\eta_t = \frac{\beta}{t+\gamma}$ for some $\beta>\frac{1}{\mu}$ and $\gamma>0$ s.t. $\eta_1\leq \min\{\frac{1}{\mu}, \frac{1}{4L}\}$ and $\eta_t\leq 2\eta_{t+E}$. We're going to prove $\Delta_t\leq \frac{v}{t+\gamma}$ where $v=\max\{\frac{\beta^2(B+C)}{\beta\mu - 1}, (\gamma + 1)\Delta_1\}$.

    We use mathematical induction to prove that. When $t=1$, $\Delta_1\leq \max\{\frac{\beta^2(B+C)}{\beta\mu - 1}, (\gamma + 1)\Delta_1\}$ clearly holds. Assuming that $t$ holds for the inequality, we have
    \begin{align}
        \Delta_{t+1} & \leq (1-\eta_t\mu)\Delta_t + \eta_t^2(B + C)\\
        & = \left(1 - \frac{\beta\mu}{t+\gamma}\right)\frac{v}{t+\gamma} + \frac{\beta^2(B+C)}{(t+\gamma)^2}\\
        & = \frac{t+\gamma - \beta\mu}{(t+\gamma)^2}v + \frac{\beta^2(B+C)}{(t+\gamma)^2}\\
        & = \frac{t+\gamma - 1}{(t+\gamma)^2}v + \left[\frac{\beta^2(B+C)}{(t+\gamma)^2} - \frac{\beta\mu - 1}{(t+\gamma)^2}v \right].
    \end{align}
    According to $v\geq\frac{\beta^2(B+C)}{\beta\mu - 1}$, we have
    \begin{align}
        \Delta_{t+1} & \leq \frac{t+\gamma - 1}{(t+\gamma)^2}v \\
        & = \frac{(t+\gamma - 1)(t+\gamma+1)}{(t+\gamma)^2(t+\gamma+1)}v \\
        & = \frac{(t+\gamma)^2-1}{(t+\gamma)^2(t+\gamma+1)}v \\
        & \leq \frac{v}{t+\gamma+1} \leq \frac{v}{t+\gamma}.
    \end{align}
    Then by the Assumption \ref{assumption:1}, we have
    \begin{align}
        \mathbb{E}[F(\bar{c}_t)] - F^*\leq \frac{L}{2}\Delta_t\leq \frac{L}{2}\frac{v}{t+\gamma}.
    \end{align}
    Specifically, if we choose $\beta=\frac{2}{\mu}$, $\gamma = \max\{8\frac{L}{\mu}-1\}$, then $\eta_t = \frac{2}{\mu}\frac{1}{t+\gamma}$ and
    \begin{align}
        \mathbb{E}[F(\bar{c}_t)] - F^*\leq \frac{2L}{(t+\gamma)\mu}\left(\frac{B+C}{\mu} + 2L\left\|\bar{c}_{1} - c^* \right\|^2\right).
    \end{align}
\end{proof}

\section{Experiments}\label{4}
In this section, we evaluate CacheFL on several public image classification datasets, comparing its performance and convergence speed with other methods. Additionally, we assess CacheFL's resource consumption and privacy-preserving capabilities relative to alternative approaches. Furthermore, we examine the impact of using synthetic dataset initialization, federated training, and different hyperparameters on the overall effectiveness of the method.

\subsection{Experimental Settings}
\noindent \textbf{Datasets and Data Heterogeneity.} 
We conduct experiments for CacheFL on 11 widely used image classification datasets: ImageNet \cite{deng2009imagenet}, Caltech101 \cite{fei2004learning}, DTD \cite{cimpoi2014describing}, EuroSAT \cite{helber2019eurosat}, FGVCAircraft \cite{maji2013fine}, Food101 \cite{bossard2014food},  Flowers102 \cite{nilsback2008automated}, OxfordPets \cite{parkhi2012cats}, StandfordCars \cite{krause20133d}, SUN397 \cite{xiao2010sun}, and UCF101 \cite{soomro2012ucf101}.
The raw training data for all 11 datasets consists of 16 images per class. To evaluate the robustness of our approach against data heterogeneity, we adopt three distinct methods of dataset partitioning:
\begin{itemize}
    \item \textbf{IID} (iid): The data is evenly distributed across all clients. Each client possesses an identical set of classes, with a relatively uniform distribution in quantity. For instance, in the case where 16 images are allocated to 10 clients, the first four clients are provided with one image per class, whereas the subsequent six clients are provided with two images per class.
    \item \textbf{Dirichlet non-IID} (dir): The data is partitioned randomly among clients using a Dirichlet distribution with $Dir(0.1)$.
    \item \textbf{Extreme non-IID} (pat): Each client owns the independent and non-overlapping classes. For instance, when distributing 100 classes across 10 clients, each client receives a distinct, non-overlapping set of 10 classes.
\end{itemize}

\noindent \textbf{Baselines methods.}
We perform a performance comparison between six approaches: FedAvg \cite{mcmahan2017communication}, FedProx \cite{li2020federated}, CLIP \cite{radford2021learning}, AdapterFL (integrate adapter tuning with federated learning), PromptFL \cite{guo2023promptfl}, and CacheFL. 
In FedAvg and FedProx, we use ResNet-50 \cite{he2016deep} as the backbone for training from scratch. For methods based on CLIP, we adopt ResNet-50 as the image encoder and a transformer as the text encoder. 

\begin{itemize}
    \item \textbf{FedAvg}: A widely used federated learning framework. ResNet-50 is employed as the backbone model, trained from scratch for 500 rounds, with each round consisting of one epoch. 
    \item \textbf{FedProx}: An enhanced optimization method based on FedAvg, which introduces a proximal term to address challenges posed by non-IID data distributions. All other settings are the same as FedAvg, with the proximal term parameter $\mu = 0.1$.
    \item \textbf{CLIP}: The original CLIP model without any fine-tuning.
    \item \textbf{AdapterFL}: An adapter-based FL method for CLIP fine-tuning, which learns an adapter across clients. The adapter consists of two fully connected layers and two ReLU layers, integrated into the image encoder. The original CLIP model weights remain frozen, and only the adapter layers are updated during training. The global training includes 20 rounds, with one local epoch per round.
    \item \textbf{PromptFL}: A prompt-based FL method for CLIP fine-tuning, which learns a prompt across clients. The prompt consists of $p$ learnable $d$-dimensional vectors, where $d$ corresponds to the dimension of word embeddings in the text encoder (default is 512), and $p$ is a hyperparameter (default is 16). Similar to AdapterFL, the global training consists of 20 rounds, with one local epoch per round.
    \item \textbf{CacheFL}:  This method adopts DALL·E to generate 16 synthetic images for each category. The cache model is initialized with this synthetic dataset, and training follows Algorithm \ref{fedavg1} with $T = 20$ (global rounds) and $E = 1$ (local epochs). 
    % For the two hyperparameters $\alpha$ and $\beta$, we use the grid search to adjust the optimal values for each dataset.
\end{itemize}

\begin{table*}[htbp]
    \centering
    \caption{Performance of CacheFL Against Existing FL Method. }
    \label{tab:results}
    \begin{minipage}[t]{0.24\textwidth}
        \centering
        \textbf{\scriptsize (a) Average over 11 datasets} \\[0.5ex]
        % \caption*{\textbf{Average over 11 datasets}}
        \begin{tabularx}{\textwidth}{lYYY}
            \toprule
            Method           & iid           & dir           & pat           \\ 
            \midrule
            FedAvg           & 5.2           & 5.1           & 4.4           \\
            FedProx          & 10.3           & 9.8           & 7.3           \\
            CLIP             & 60.8          & 60.8          & 60.8          \\
            AdapterFL        & 56.1          & 56.2          & 56.2          \\
            PromptFL         & 64.0          & 63.1          & 62.3          \\
            \rowcolor[gray]{0.9} CacheFL & \textbf{73.6} & \textbf{70.9} & \textbf{69.4} \\ 
            & \textcolor{purple}{+9.6} & \textcolor{purple}{+7.8} & \textcolor{purple}{+7.1} \\ 
            \bottomrule
        \end{tabularx}
    \end{minipage}
    \hfill
    \begin{minipage}[t]{0.24\textwidth}
        \centering
        \textbf{\scriptsize (b) ImageNet} \\[0.5ex]
        % \caption*{ImageNet}
        \begin{tabularx}{\textwidth}{lYYY}
            \toprule
            Method           & iid           & dir           & pat           \\ 
            \midrule
            FedAvg           & 0.4           & 0.3           & 0.2           \\
            FedProx          & 0.4          & 0.3          & 0.3          \\
            CLIP             & 61.5          & 61.5          & 61.5          \\
            AdapterFL        & 58.4          & 58.3          & 58.3          \\
            PromptFL         & 62.1          & 61.8          & 62.1          \\
            \rowcolor[gray]{0.9} CacheFL & \textbf{64.3} & \textbf{64.0} & \textbf{64.0} \\ 
            & \textcolor{purple}{+2.2} & \textcolor{purple}{+2.2} & \textcolor{purple}{+1.9} \\ 
            \bottomrule
        \end{tabularx}
    \end{minipage} 
    \hfill
    \begin{minipage}[t]{0.24\textwidth}
        \centering
        \textbf{\scriptsize (c) Caltech101} \\[0.5ex]
        % \caption*{Caltech101}
        \begin{tabularx}{\textwidth}{lYYY}
            \toprule
            Method           & iid           & dir           & pat           \\ 
            \midrule
            FedAvg           & 6.0           & 9.3           & 5.7           \\
            FedProx          & 19.4          & 15.8          & 20.0          \\
            CLIP             & 88.2          & 88.2          & 88.2          \\
            AdapterFL        & 85.6          & 85.0          & 85.0          \\
            PromptFL         & 90.2          & 90.7          & 88.7          \\
            \rowcolor[gray]{0.9} CacheFL & \textbf{92.8} & \textbf{92.6} & \textbf{92.7} \\ 
            & \textcolor{purple}{+2.6} & \textcolor{purple}{+1.9} & \textcolor{purple}{+4.0} \\ 
            \bottomrule
        \end{tabularx}
    \end{minipage} 
    \hfill 
    \begin{minipage}[t]{0.24\textwidth}
        \centering
        \textbf{\scriptsize (d) DTD} \\[0.5ex]
        % \caption*{DTD}
        \begin{tabularx}{\textwidth}{lYYY}
            \toprule
            Method           & iid           & dir           & pat           \\ 
            \midrule
            FedAvg           & 4.3           & 4.4           & 4.6           \\
            FedProx          & 9.9          & 8.4          & 7.3          \\
            CLIP             & 50.1          & 50.1          & 50.1          \\
            AdapterFL        & 39.1          & 39.4          & 39.3          \\
            PromptFL         & 55.0          & 48.5          & 44.4          \\
            \rowcolor[gray]{0.9} CacheFL & \textbf{66.0} & \textbf{65.2} & \textbf{63.5} \\
            & \textcolor{purple}{+11.0} & \textcolor{purple}{+16.7} & \textcolor{purple}{+19.1} \\ 
            \bottomrule
        \end{tabularx}
    \end{minipage}
    \vspace{1em}
  
    \begin{minipage}[t]{0.24\textwidth}
        \centering
        \textbf{\scriptsize (e) EuroSAT} \\[0.5ex]
        % \caption*{EuroSAT}
        \begin{tabularx}{\textwidth}{lYYY}
            \toprule
            Method           & iid           & dir           & pat           \\ 
            \midrule
            FedAvg           & 26.8          & 24.9          & 20.0          \\
            FedProx          & 42.8          & 49.3          & 21.2          \\
            CLIP             & 37.4          & 37.4          & 37.4          \\
            AdapterFL        & 25.9          & 26.2          & 26.7          \\
            PromptFL         & 45.5          & 52.5          & 54.1          \\
            \rowcolor[gray]{0.9} CacheFL & \textbf{81.6} & \textbf{66.2} & \textbf{63.6} \\
            & \textcolor{purple}{+36.1} & \textcolor{purple}{+13.7} & \textcolor{purple}{+9.5} \\ 
            \bottomrule
        \end{tabularx}
    \end{minipage} 
    \hfill
    \begin{minipage}[t]{0.24\textwidth}
        \centering
        \textbf{\scriptsize (f) FGVCAircraft} \\[0.5ex]
        % \caption*{FGVCAircraft}
        \begin{tabularx}{\textwidth}{lYYY}
            \toprule
            Method           & iid           & dir           & pat           \\ 
            \midrule
            FedAvg           & 2.7           & 2.7           & 2.6           \\
            FedProx          & 6.6          & 5.0          & 4.0          \\
            CLIP             & 20.7          & 20.7          & 20.7          \\
            AdapterFL        & 15.8          & 16.0          & 16.3          \\
            PromptFL         & 6.2           & 19.0          & 15.6          \\
            \rowcolor[gray]{0.9} CacheFL & \textbf{29.3} & \textbf{26.6} & \textbf{25.8} \\
            & \textcolor{purple}{+23.1} & \textcolor{purple}{+7.6} & \textcolor{purple}{+10.2} \\ 
            \bottomrule
        \end{tabularx}
    \end{minipage} 
    \hfill
    \begin{minipage}[t]{0.24\textwidth}
        \centering
        \textbf{\scriptsize (g) Food101} \\[0.5ex]
        % \caption*{Food101}
        \begin{tabularx}{\textwidth}{lYYY}
            \toprule
            Method           & iid           & dir           & pat           \\ 
            \midrule
            FedAvg           & 1.7           & 1.6           & 1.5           \\
            FedProx          & 1.6          & 1.8          & 1.7          \\
            CLIP             & 77.7          & 77.7          & 77.7          \\
            AdapterFL        & 74.5          & 74.3          & 74.3          \\
            PromptFL         & 78.0          & 77.4          & 78.1          \\
            \rowcolor[gray]{0.9} CacheFL & \textbf{78.5} & \textbf{78.5} & \textbf{78.5} \\
            & \textcolor{purple}{+0.5} & \textcolor{purple}{+1.1} & \textcolor{purple}{+0.4} \\ 
            \bottomrule
        \end{tabularx}
    \end{minipage} 
    \hfill
    \begin{minipage}[t]{0.24\textwidth}
        \centering
        \textbf{\scriptsize (h) Flowers102} \\[0.5ex]
        % \caption*{Flowers102}
        \begin{tabularx}{\textwidth}{lYYY}
            \toprule
            Method           & iid           & dir           & pat           \\ 
            \midrule
            FedAvg           & 5.2           & 4.4           & 4.5           \\
            FedProx          & 3.7          & 3.7          & 3.3          \\
            CLIP             & 64.2          & 64.2          & 64.2          \\
            AdapterFL        & 61.1          & 61.1          & 61.1          \\
            PromptFL         & 88.2          & 68.6          & 66.3          \\
            \rowcolor[gray]{0.9} CacheFL & \textbf{94.2} & \textbf{90.9} & \textbf{82.1} \\
            & \textcolor{purple}{+6.0} & \textcolor{purple}{+22.3} & \textcolor{purple}{+15.8} \\ 
            \bottomrule
        \end{tabularx}
    \end{minipage}
    \vspace{1em}
    
    \begin{minipage}[t]{0.24\textwidth}
        \centering
        \textbf{\scriptsize (i) OxfordPets} \\[0.5ex]
        % \caption*{OxfordPets}
        \begin{tabularx}{\textwidth}{lYYY}
            \toprule
            Method           & iid           & dir           & pat           \\ 
            \midrule
            FedAvg           & 5.0           & 4.0           & 4.4           \\
            FedProx          & 8.9          & 5.1          & 8.2          \\
            CLIP             & 86.4          & 86.4          & 86.4          \\
            AdapterFL        & 83.1          & 83.1          & 83.0          \\
            PromptFL         & 88.5          & 86.9          & 87.0          \\
            \rowcolor[gray]{0.9} CacheFL & \textbf{89.5} & \textbf{88.3} & \textbf{88.7} \\
            & \textcolor{purple}{+1.0} & \textcolor{purple}{+1.4} & \textcolor{purple}{+1.7} \\ 
            \bottomrule
        \end{tabularx}
    \end{minipage} 
    \hfill
    \begin{minipage}[t]{0.24\textwidth}
        \centering
        \textbf{\scriptsize (j) StandfordCars} \\[0.5ex]
        % \caption*{StandfordCars}
        \begin{tabularx}{\textwidth}{lYYY}
            \toprule
            Method           & iid           & dir           & pat           \\ 
            \midrule
            FedAvg           & 1.6           & 1.2           & 1.2           \\
            FedProx          & 3.6          & 2.3          & 2.6          \\
            CLIP             & 57.0          & 57.0          & 57.0          \\
            AdapterFL        & 56.1          & 56.2          & 56.0          \\
            PromptFL         & 58.7          & 58.3          & 58.4          \\
            \rowcolor[gray]{0.9} CacheFL & \textbf{66.0} & \textbf{64.5} & \textbf{64.7} \\
            & \textcolor{purple}{+7.3} & \textcolor{purple}{+6.2} & \textcolor{purple}{+6.3} \\ 
            \bottomrule
        \end{tabularx}
    \end{minipage} 
    \hfill
    \begin{minipage}[t]{0.24\textwidth}
        \centering
        \textbf{\scriptsize (k) SUN397} \\[0.5ex]
        % \caption*{SUN397}
        \begin{tabularx}{\textwidth}{lYYY}
            \toprule
            Method           & iid           & dir           & pat           \\ 
            \midrule
            FedAvg           & 0.9           & 0.9           & 0.8           \\
            FedProx          & 0.9          & 0.8          & 0.8          \\
            CLIP             & 62.3          & 62.3          & 62.3          \\
            AdapterFL        & 59.1          & 59.2          & 58.8          \\
            PromptFL         & 64.2          & 63.6          & 64.4          \\
            \rowcolor[gray]{0.9} CacheFL & \textbf{70.1} & \textbf{69.1} & \textbf{68.5} \\ 
            & \textcolor{purple}{+5.9} & \textcolor{purple}{+5.5} & \textcolor{purple}{+4.1} \\ 
            \bottomrule
        \end{tabularx}
    \end{minipage} 
    \hfill
    \begin{minipage}[t]{0.24\textwidth}
        \centering
        \textbf{\scriptsize (l) UCF101} \\[0.5ex]
        % \caption*{UCF101}
        \begin{tabularx}{\textwidth}{lYYY}
            \toprule
            Method           & iid           & dir           & pat           \\ 
            \midrule
            FedAvg           & 3.2           & 2.4           & 2.3           \\
            FedProx          & 15.4          & 15.6          & 10.8          \\
            CLIP             & 63.3          & 63.3          & 63.3          \\
            AdapterFL        & 59.1          & 59.1          & 59.0          \\
            PromptFL         & 67.3          & 66.7          & 66.5          \\
            \rowcolor[gray]{0.9} CacheFL & \textbf{77.6} & \textbf{74.2} & \textbf{71.5} \\
            & \textcolor{purple}{+10.3} & \textcolor{purple}{+7.5} & \textcolor{purple}{+5.0} \\
            \bottomrule
        \end{tabularx}
    \end{minipage} 
    \vspace{2pt}
    \begin{flushleft}We evaluate the performance in terms of accuracy on 11 datasets. We adopt three data partitioning methods, where ``iid" stands for the IID setting, ``dir" represents the Dirichlet non-IID setting, and ``pat" denotes the Extreme non-IID setting. Absolute gains over PromptFL are indicated in \textcolor{purple}{purple}.\end{flushleft}
\end{table*}

\noindent \textbf{Implementation Details.}
All experiments are conducted using PyTorch on Tesla V100 SXM2 32GB GPU, with training performed using stochastic gradient descent (SGD) with a learning rate of 0.001. The loss function is the cross-entropy loss. 
% CacheFL adopts DALL·E to generate 16 synthetic images per category. For the two hyperparameters $\alpha$ and $\beta$, we use the grid search to adjust the optimal values for each dataset. 

\noindent \textbf{Evaluation Metrics.} We use accuracy as the primary metric to evaluate model performance. Resource consumption is assessed by communication cost and computation cost. To evaluate the security and privacy protection of CacheFL, we employ a representative gradient inversion attack \cite{zhu2019deep} to attempt the reconstruction of raw training images, testing the model's resilience to such attacks.

\begin{figure*}[t]
   \centering
   \centering
   \subfloat{\includegraphics[width=0.165\textwidth]{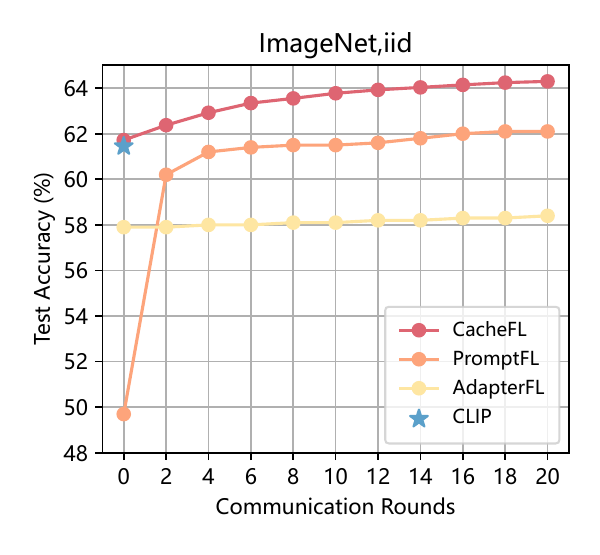}}
   \subfloat{\includegraphics[width=0.165\textwidth]{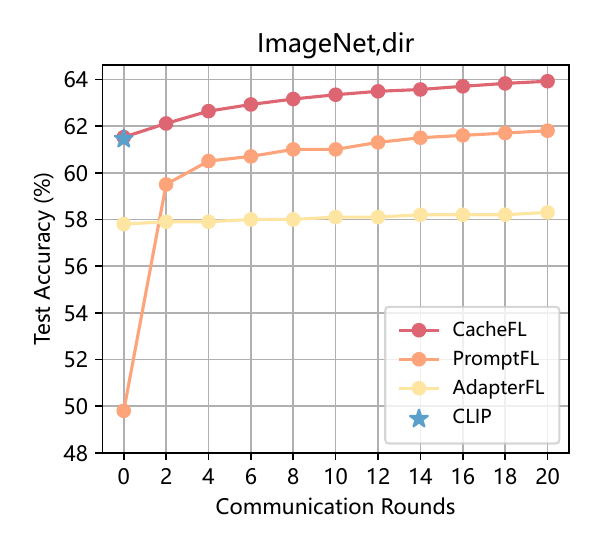}}
   \subfloat{\includegraphics[width=0.165\textwidth]{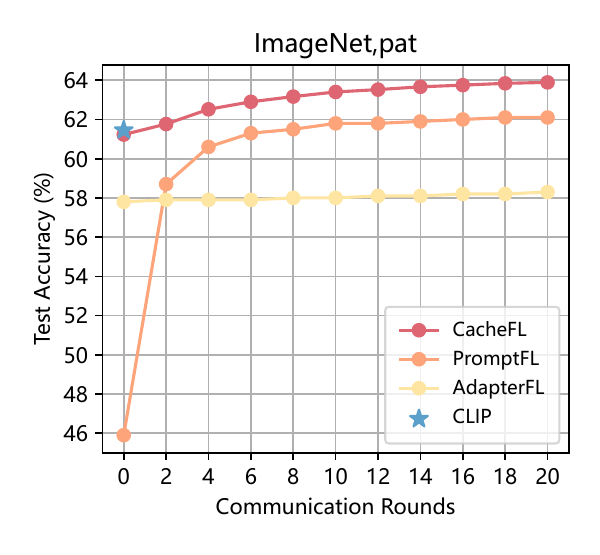}}
   \subfloat{\includegraphics[width=0.165\textwidth]{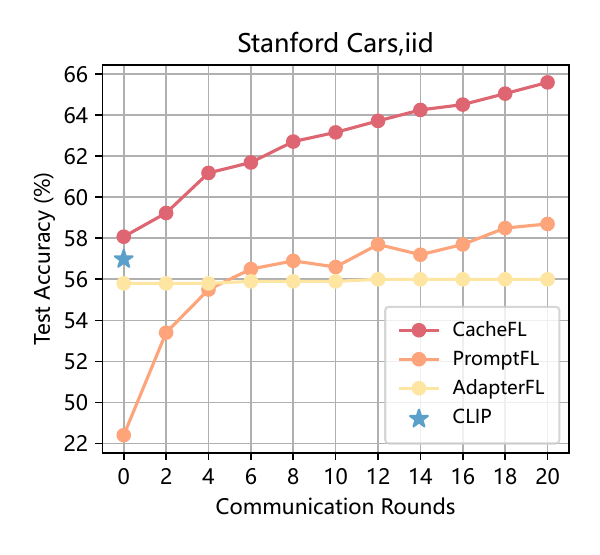}}
   \subfloat{\includegraphics[width=0.165\textwidth]{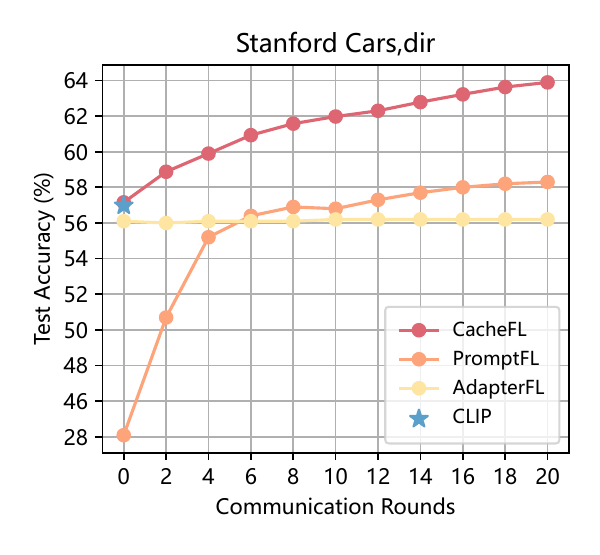}}
   \subfloat{\includegraphics[width=0.165\textwidth]{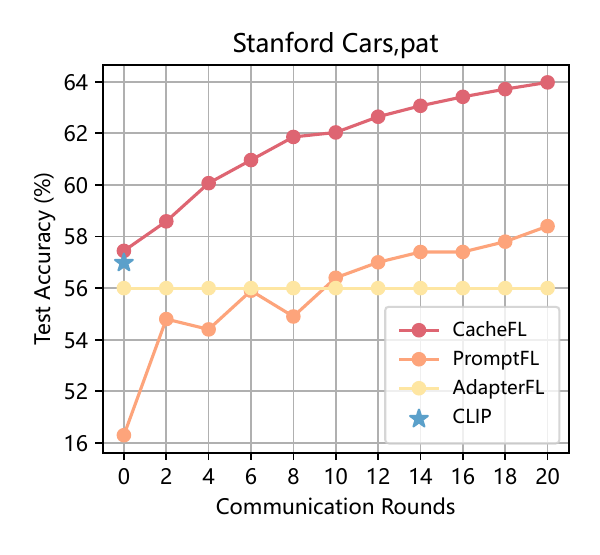}}
   \\
   \subfloat{\includegraphics[width=0.165\textwidth]{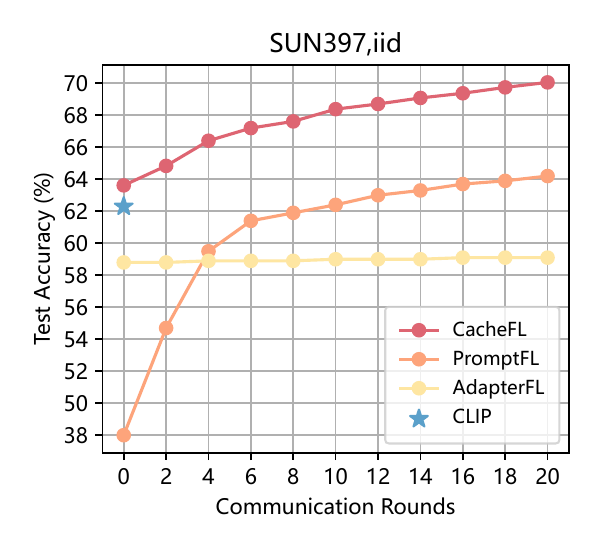}}
   \subfloat{\includegraphics[width=0.165\textwidth]{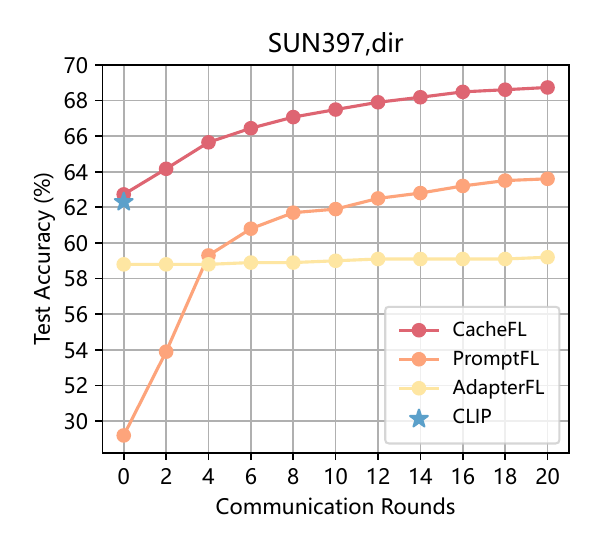}}
   \subfloat{\includegraphics[width=0.165\textwidth]{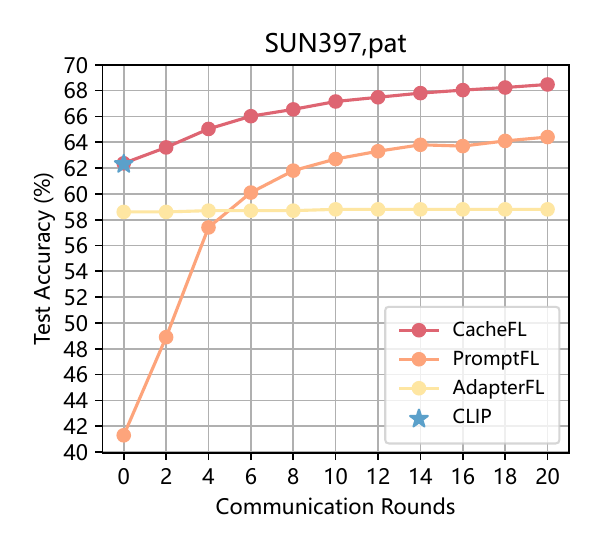}}
   \subfloat{\includegraphics[width=0.165\textwidth]{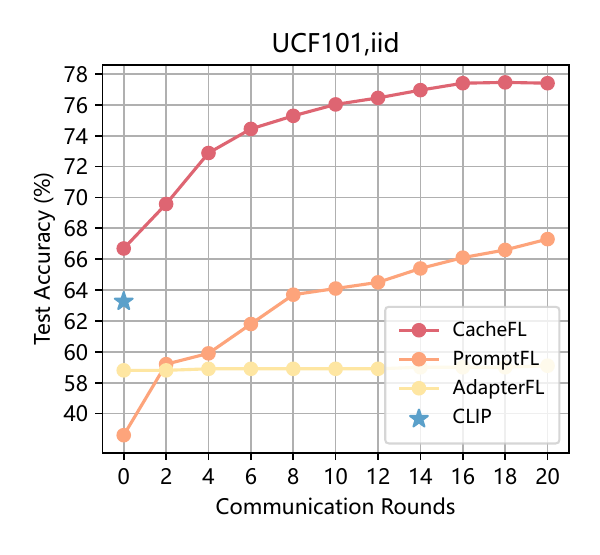}}
   \subfloat{\includegraphics[width=0.165\textwidth]{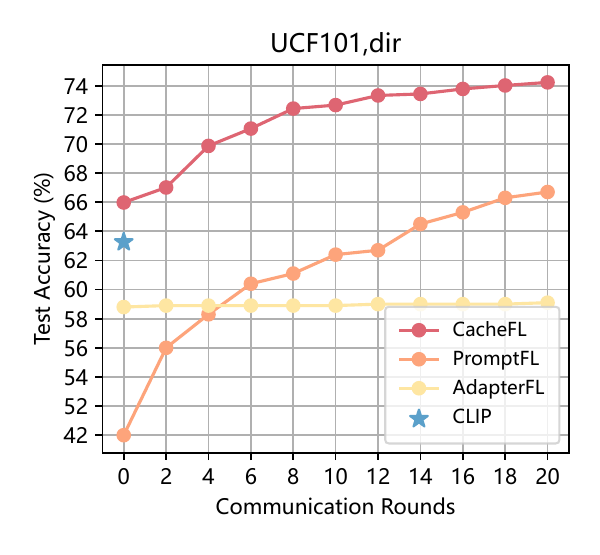}}
   \subfloat{\includegraphics[width=0.165\textwidth]{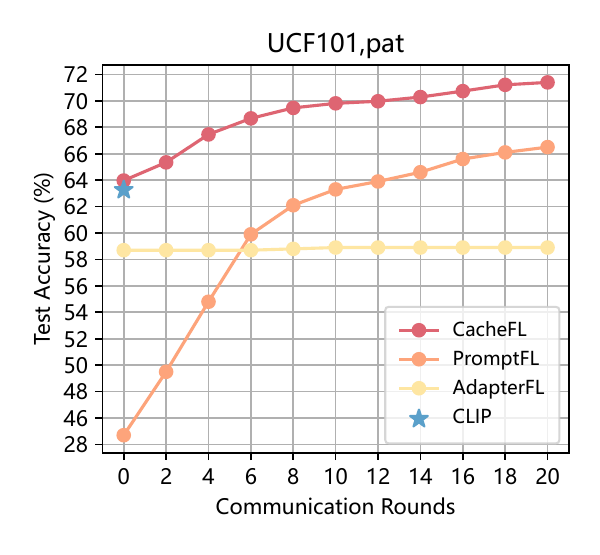}}
   \caption{Convergence Speed. The y-axis shows test accuracy (\%), and the x-axis shows communication rounds. Across 4 datasets (ImageNet, Stanford Cars, SUN397, and UCF101) and data settings (iid, dir, pat), CacheFL consistently outperforms all other methods in terms of final accuracy and convergence speed.}
    \label{fig:speed}
\end{figure*}

\begin{figure}[t]
   \centering
   \subfloat[\scriptsize Communication Overhead]{\label{fig:comm1}\includegraphics[width=0.22\textwidth]{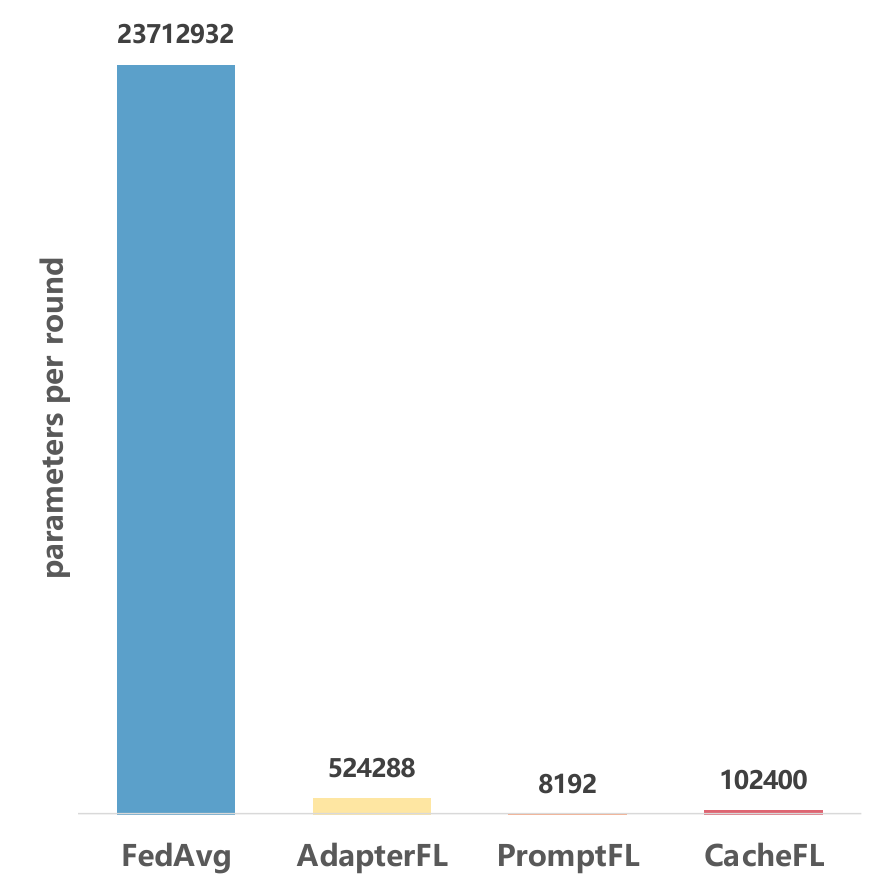}} \hfill
   \subfloat[\scriptsize Computation Overhead]{\label{fig:comm2}\includegraphics[width=0.22\textwidth]{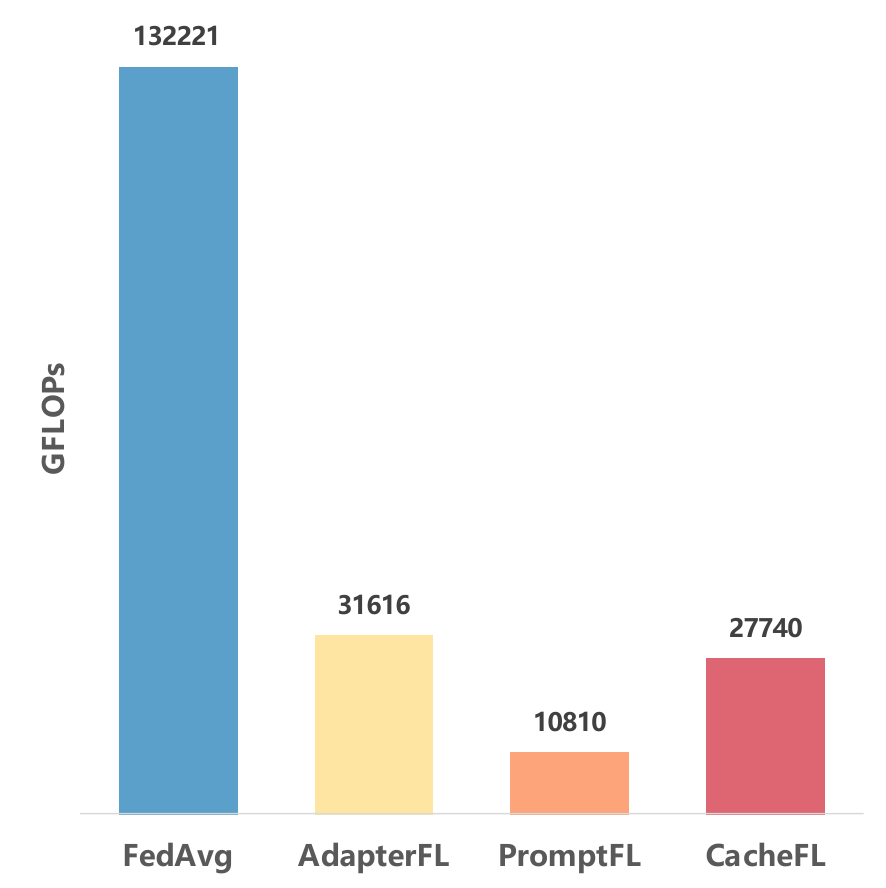}}
   \caption{Resource Consumption. We evaluate the communication cost based on the number of parameters uploaded per round and the computation cost based on the Floating Point Operations (FLOPs) required for training.}
    \label{fig:comm}
\end{figure}

\begin{figure}[t]
  \centering
  \includegraphics[width=\linewidth]{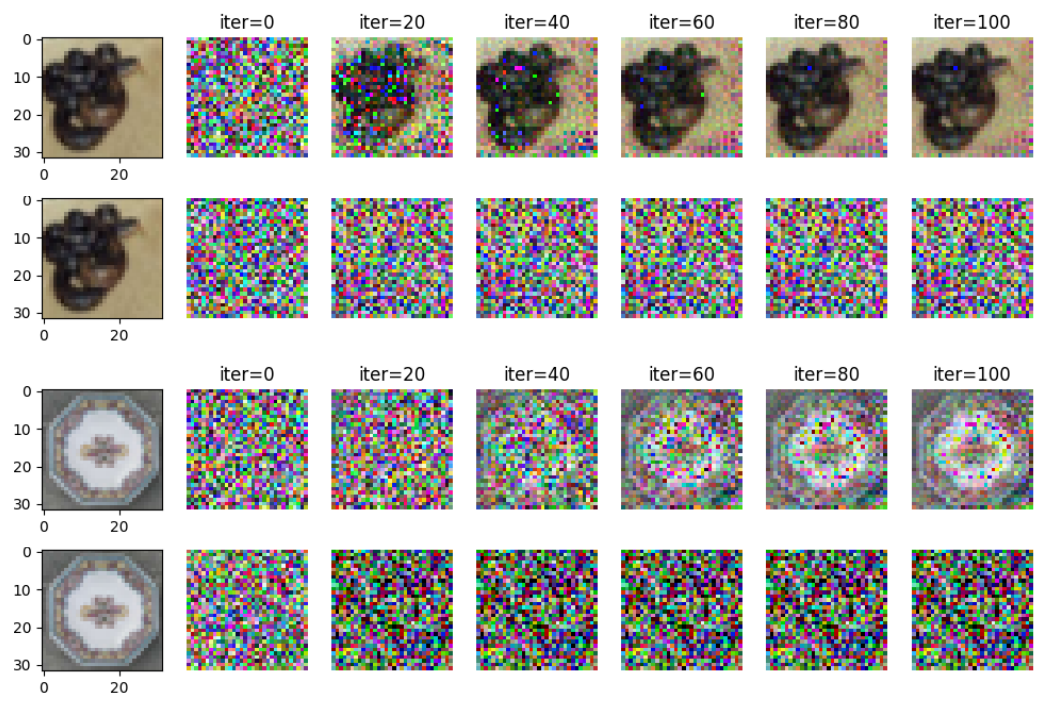}
  \caption{Gradient Inversion Attack. The top line of each image shows the results of FedAvg, where the attack almost recovers the raw image. The bottom line of each image presents the results of CacheFL, where the attack failed to reconstruct meaningful images.}
  \label{fig:dlg}
\end{figure}

\subsection{Overall Performance}
\noindent \textbf{Image Classification Performance.} 
Table \ref{tab:results} presents a comprehensive comparison of CacheFL with several representative baseline methods across multiple datasets and data distribution settings. As shown, CacheFL consistently achieves superior performance across all evaluated datasets, under both IID and non-IID scenarios. 
Moreover, CacheFL shows strong robustness to non-IID data distributions compared to PromptFL, which is evident in its performance on datasets such as DTD and Flowers102. For clients with imbalanced data, the class-balanced synthetic dataset effectively supplements missing knowledge during the initial stages, helping to mitigate the impact of heterogeneous data.
In comparison, the performance of traditional methods such as FedAvg and FedProx falls short across all datasets, primarily because these methods are not pre-trained and cannot achieve strong performance by training from scratch on a small amount of local data, especially considering the complexity of the datasets used. In contrast, methods like AdapterFL, PromptFL, and CacheFL improve upon FedAvg by incorporating pre-trained knowledge, leading to significantly better accuracy. 
However, AdapterFL also demonstrates a decline in performance compared to the zero-shot CLIP model. This indicates that simply integrating adapter tuning with federated learning does not guarantee improvements on new tasks. In scenarios with limited local data, conventional adapter fine-tuning methods may face difficulties in achieving convergence.
In contrast, CacheFL surpasses its competitors, especially under more extreme data imbalances. These results highlight CacheFL’s good adaptability and generalization capabilities across diverse datasets.

\noindent \textbf{Convergence Speed.}
As illustrated in Figure~\ref{fig:speed}, CacheFL consistently demonstrates superior performance across all data distribution settings. It achieves competitive accuracy in the early stages of training and exhibits rapid convergence, consistently surpassing other baseline methods in terms of test accuracy. 
Furthermore, the good convergence efficiency of CacheFL reduces the number of communication rounds needed in real-world training scenarios, thereby enhancing overall communication efficiency.

\noindent \textbf{Communication and Computation Overhead.}
Figure \ref{fig:comm1} illustrates the communication cost, quantified as the number of uploaded parameters per round, while Figure \ref{fig:comm2} depicts the computation cost, measured in Floating Point Operations (FLOPs) required for training. FedAvg imposes a significant communication burden of 23.7 million parameters per round and a training computation cost of 13,221 GB FLOPs, rendering it inefficient for large-scale federated learning applications. While PromptFL achieves the lowest communication and computation costs, it sacrifices some performance as seen in prior results. CacheFL, however, achieves an optimal balance between communication efficiency, computation efficiency, and model performance, demonstrating its scalability and practicality for federated learning in resource-constrained settings.

\noindent \textbf{Privacy Preservation.} 
CacheFL only requires class names as DALL·E prompt inputs, not actual data or its distribution, minimizing privacy risks. The generated outputs are used solely to initialize the cache model and do not influence the training process, ensuring that the privacy of local data is not leaked. 
Moreover, we evaluate the privacy-preserving capabilities of CacheFL by applying a representative gradient inversion attack to reconstruct raw training images, following the method outlined by Zhu \emph{et al.} \cite{zhu2019deep}. To demonstrate the effect of defense, we compare the CacheFL approach with the FedAvg method. We utilize dummy data and labels and train for 100 iterations, recording reconstruction results at intervals of 20 epochs, as shown in Figure \ref{fig:dlg}. The findings demonstrate that, with FedAvg, raw images can be almost recovered after a certain number of epochs, whereas the reconstructed images from CacheFL resemble noise. This indicates that CacheFL effectively preserves privacy through its cache model mechanism.

\subsection{Ablation Study}

\begin{figure*}[t]
   \centering
   \subfloat[ImageNet]{\includegraphics[width=0.22\textwidth]{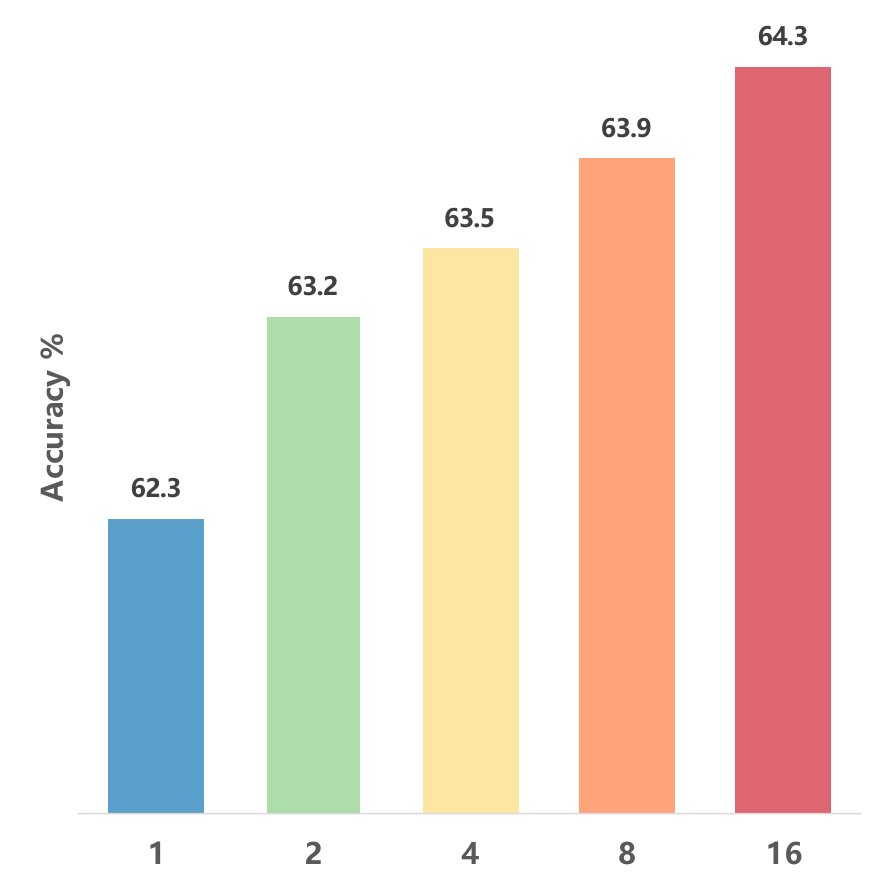}}
   \subfloat[StandfordCars]{\includegraphics[width=0.22\textwidth]{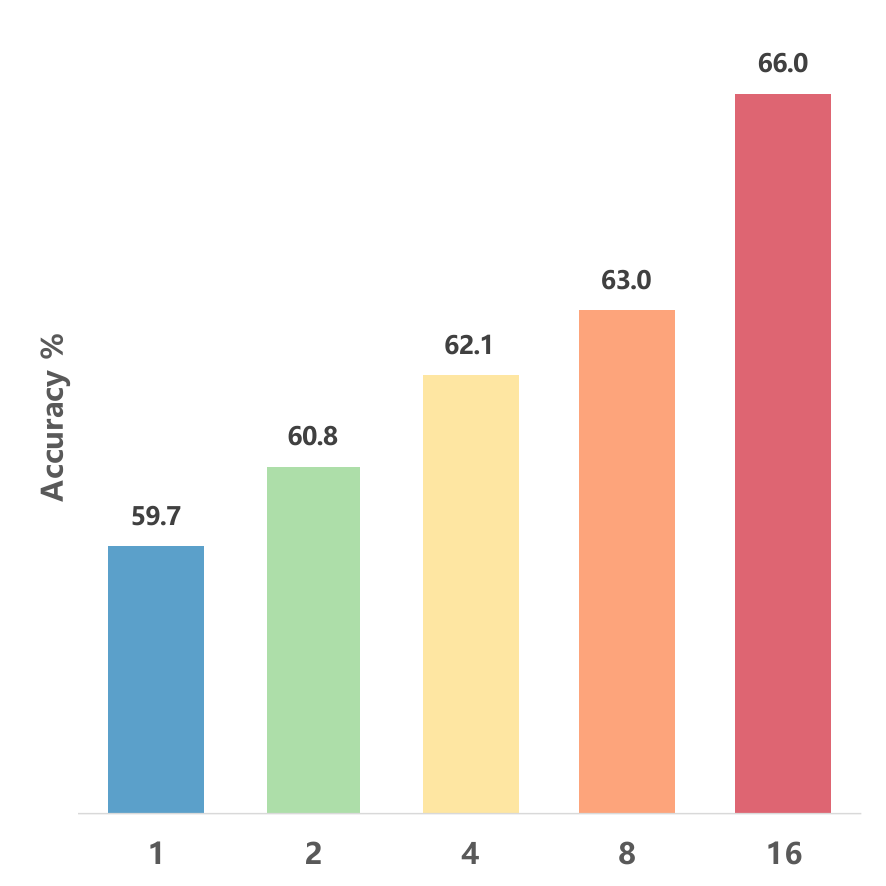}}
   \subfloat[SUN397]{\includegraphics[width=0.22\textwidth]{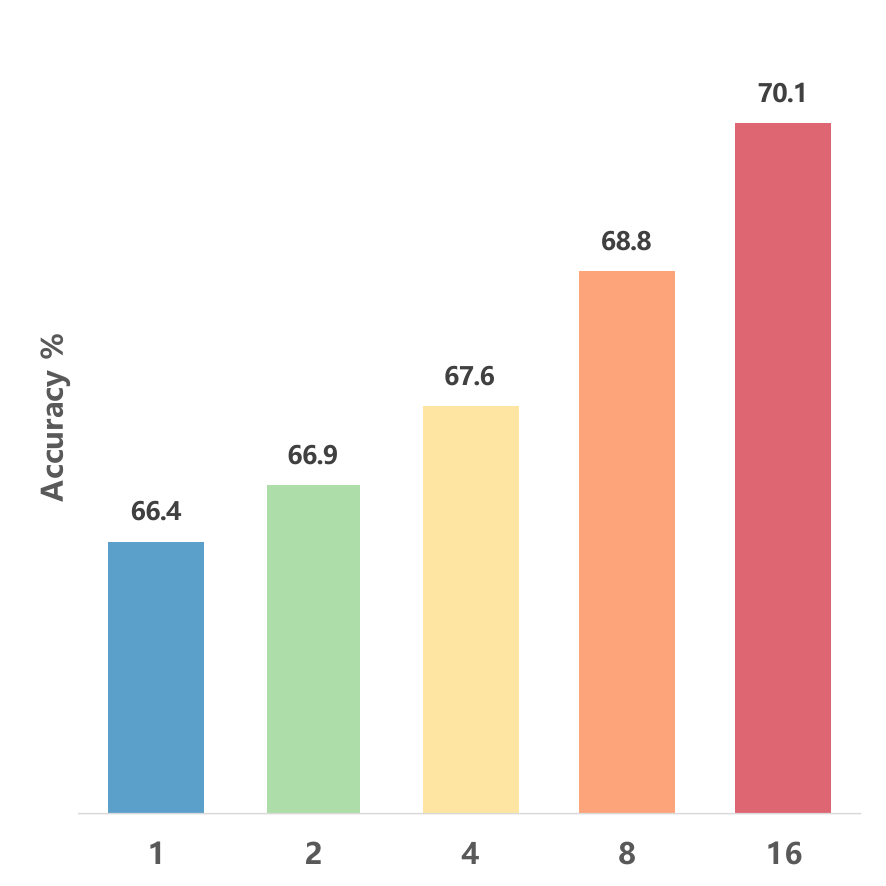}}
   \subfloat[UCF101]{\includegraphics[width=0.22\textwidth]{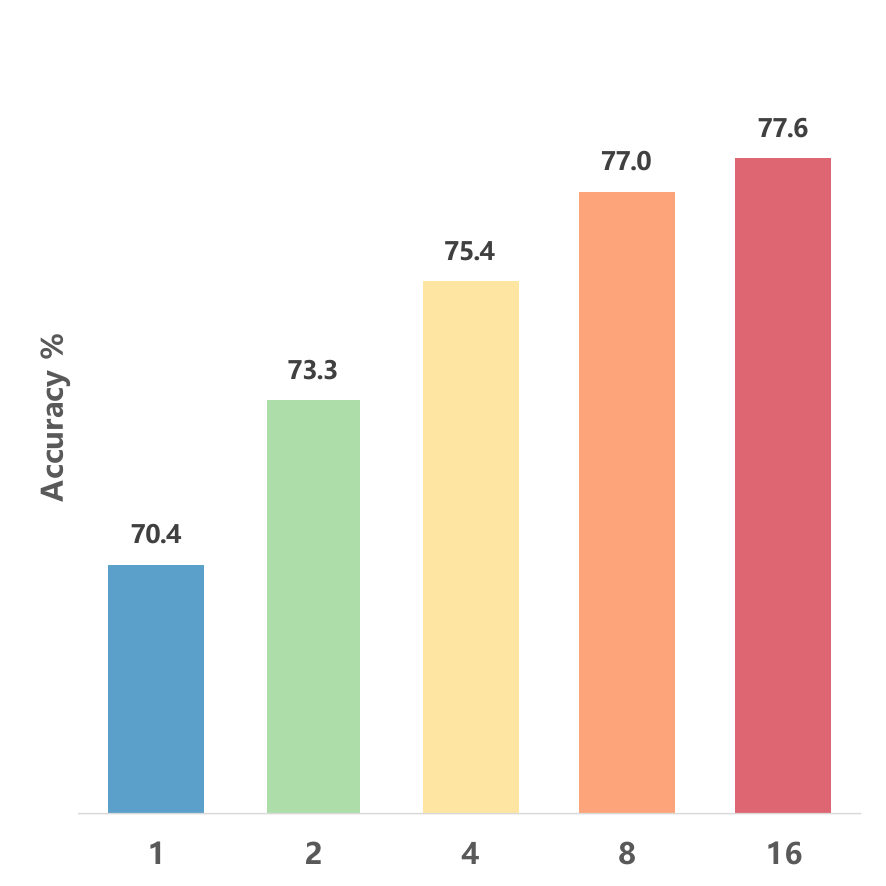}}
   \caption{Performance with the Number of Synthesized Images. DALL·E generates images for each category in increments of 1 to 16.}
    \label{fig:num}
\end{figure*}

\begin{table}[t]
	\caption{Performance with Synthetic Dataset Initialization and Federated Training.}
    \centering
    \resizebox{0.49\textwidth}{!}{
    \begin{tabular}{lllll}
    \toprule
	& ImageNet & StandfordCars & SUN397 & UCF101 \\
    \midrule
    PromptFL & 62.1     & 58.7  & 64.2   & 67.3   \\
	\textbf{CacheFL} & \textbf{64.3}     & \textbf{66.0}          & \textbf{70.1}   & \textbf{77.6} \\ 
    w.o. Syn & 62.6  & 59.7  & 65.6   & 69.0   \\
    w.o. Training & 60.3   & 55.2  & 60.4   & 57.9   \\
	\bottomrule
    \end{tabular}
    }
    \vspace{2pt}
    \begin{flushleft}``w.o. Syn" represents random cache model initialization without a synthetic dataset. ``w.o. Training" represents the method of not using federated training and only using the initial cache model with a synthetic dataset.\end{flushleft}
	\label{tab:fedtr}
\end{table}

\noindent \textbf{Performance with Synthetic Dataset Initialization.}  
Table \ref{tab:fedtr} compares the performance of cache models initialized randomly versus those initialized with a synthetic dataset across four different datasets. Performance degrades with random initialization, indicating that the class-balanced synthetic dataset effectively supplements missing knowledge during the initial stages. Existing baselines struggle to leverage synthetic data without substantial modifications. In contrast, our approach requires only a lightweight feature extraction and encoding step to initialize the cache model using a small synthetic dataset. Remarkably, our method achieves competitive results relative to PromptFL even when using random initialization. These findings indicate that although our approach benefits from synthetic data, it does not depend exclusively on it.
Moreover, the results indicate that CacheFL maintains strong performance even under worst-case conditions, where biases and domain gaps between synthetic and real data make the initialization comparable to random. As these discrepancies are reduced, the initialization becomes more effective, resulting in improved outcomes.

\noindent \textbf{Performance with Federated Training.} 
Table \ref{tab:fedtr} also compares the performance of CacheFL with and without federated training across four datasets. CacheFL consistently outperforms the version without federated training on all datasets. Notably, the federated version achieves an accuracy of 77.6\% on UCF 101, marking a substantial improvement of 19.7\%. This significant improvement demonstrates the effectiveness of federated training in enhancing model generalization and performance by leveraging private data. The results suggest that federated training plays a critical role in maximizing the potential of CLIP.

\noindent \textbf{Performance with the Number of Synthesized Images.} 
Figure \ref{fig:num} presents the impact of the number of synthesized images generated by the DALL·E model on the accuracy of the CacheFL method on four datasets. As the number of synthesized images increases from 1 to 16 for each category, there is a clear upward trend in accuracy. The experimental results demonstrate that the more synthetic data samples we cache, the higher the accuracy CLIP with the cache model can achieve. However, the rate of improvement diminishes slightly beyond 8 images on the UCF101 datasets, indicating potential diminishing returns. This result suggests that while increasing the number of synthesized images can enhance performance, doing so indefinitely may not lead to continuous gains and could instead result in higher computational costs. Therefore, identifying an optimal range for balancing performance improvements with computational efficiency is important. 

\noindent \textbf{Performance with Hyperparameters $\alpha$ and $\beta$.} 
Table \ref{hyper} presents the model’s performance across varying values of the hyperparameters $\alpha$ and $\beta$. To isolate the effect of each hyperparameter, we conduct controlled experiments: $\alpha$ is varied from 0.0 to 2.0 while keeping $\beta$ fixed at 1.0, and $\beta$ is varied over the same range with $\alpha$ fixed at 0.5. The results indicate that the model achieves higher accuracy when $\alpha = 0.5$ and $\beta = 1.0$. For $\alpha$, performance improves initially but drops sharply beyond 1.0, suggesting that while incorporating the logit from the cache model is beneficial, excessive influence may hinder the performance of CLIP in this setting. In contrast, the model is more robust to variations in $\beta$, with accuracy remaining relatively stable. This indicates that while both hyperparameters influence performance, tuning $\alpha$ is particularly critical for achieving optimal results.

\begin{table}[t]
\centering
\caption{Performance under different values of hyperparameters $\alpha$ and $\beta$.}
\begin{tabular}{c|ccccc}
\toprule
\textbf{$\alpha$} & 0.0 & 0.5 & 1.0 & 1.5 & 2.0 \\
\midrule
accuracy & 61.5 & 64.1 & 63.1 & 56.8 & 51.4 \\
\midrule
\textbf{$\beta$} & 0.0 & 0.5 & 1.0 & 1.5 & 2.0 \\
\midrule
accuracy & 62.0 & 63.5 & 64.1 & 64.0 & 62.8  \\
\bottomrule
\end{tabular}
\begin{flushleft}We vary $\alpha$ from 0.0 to 2.0 while fixing $\beta$ at 1.0, and vary $\beta$ from 0.0 to 2.0 while fixing $\alpha$ at 0.5.\end{flushleft}
\label{hyper}
\end{table}

\section{Conclusion}\label{5}
In this work, we propose CacheFL, a novel framework that integrates adapter tuning with federated learning to fine-tune CLIP and similar vision-language models. CacheFL preserves data privacy and security while utilizing client-side knowledge to enhance classification performance. By transmitting only the lightweight cache model rather than the full CLIP model, CacheFL significantly reduces both communication and computation costs during the federated learning process. Furthermore, it mitigates the challenges of data heterogeneity by employing class-balanced datasets generated by DALL·E and leveraging CLIP’s strong generalization ability. Experimental results highlight CacheFL’s promise as a practical, privacy-preserving, and efficient solution for federated fine-tuning of vision-language models.
In future work, we aim to improve the quality of the generated datasets, as their construction directly influences the cache model's performance. For example, incorporating GPT \cite{achiam2023gpt} models to generate more diverse text descriptions could result in higher-quality images, further boosting CacheFL’s performance and applicability. Additionally, we plan to explore adaptive hyperparameter adjustment techniques as an alternative to empirical settings and grid search.

\section*{Acknowledgments}
This work was supported in part by the STI 2030-Major Projects of China under Grant 2021ZD0201300, and by the National Science Foundation of China under Grant 62276127.

\bibliographystyle{IEEEtran}
\bibliography{main.bib}

\end{document}